\newtcolorbox[auto counter]{tbox}[2][]{%
    enhanced, float=hbt, drop fuzzy shadow southeast,
    colback=white!5!white, colframe=white!50!black,
    width= .97\columnwidth,sharp corners, boxrule=0.8pt,
    title={Table \thetcbcounter: #2}, #1
}
\newtheorem{theorem}{Theorem}
\newtheorem{lemma}{Lemma}
\newtheorem{definition}{Definition}
\newtheorem{proposition}{Proposition}
\newcommand{\bra}[1]{\mbox{$\left\langle #1 \right|$}}
\newcommand{\ket}[1]{\mbox{$\left| #1 \right\rangle$}}
\begin{document}
\title{Unification of quantum resources in distributed scenarios}
\begin{abstract}
Quantum resources, such as coherence, discord, and entanglement, play as a key role for demonstrating advantage in many computation and communication tasks. In order to find the nature behind these resources, tremendous efforts have been made to explore the connections between them. In this work, we extend the single party coherence resource framework to the distributed scenario and relate it to basis-dependent discord. We show the operational meaning of basis-dependent discord in quantum key distribution. By formulating a framework of basis-dependent discord, we connect these quantum resources, coherence, discord, and entanglement, quantitatively, which leads to a unification of measures of different quantum resources.

%We expect that the GPE-QKD protocol would become a standard in future QKD implementations.
\end{abstract}
\author{Hongyi Zhou}
\email{zhouhy14@mails.tsinghua.edu.cn}
\affiliation{Center for Quantum Information, Institute for Interdisciplinary Information Sciences, Tsinghua University, Beijing, 100084 China}

\author{Xiao Yuan}
\email{xiao.yuan.ph@gmail.com}
\affiliation{Department of Materials, University of Oxford, Parks Road, Oxford OX1 3PH, United Kingdom}

\author{Xiongfeng Ma}
\email{xma@tsinghua.edu.cn}
\affiliation{Center for Quantum Information, Institute for Interdisciplinary Information Sciences, Tsinghua University, Beijing, 100084 China}

\maketitle

%insert suggested PACS numbers in braces on next line

%\maketitle %\maketitle must follow title, authors, abstract and \pacs

\section{introduction}
Coherence, discord, and entanglement are fundamental resources in many tasks that cannot be achieved by classical physics.
Coherence characterizes the superpositions \cite{aberg2006quantifying,Baumgratz14}, serving as a resource of quantum randomness generation \cite{YuanPhysRevA2015,yuan2016interplay,PhysRevA.97.012302}, quantum metrology \cite{giovannetti2004quantum,PhysRevA.94.052324,escher2011general,giorda2017coherence}, quantum computation \cite{PhysRevA.93.012111,anand2016coherence,Ma16,matera2016coherent}, and quantum thermodynamics \cite{PhysRevLett.113.150402,PhysRevX.5.021001,PhysRevLett.115.210403,mitchison2015coherence,goold2016role,PhysRevLett.118.070601}.
As one of the most widely used quantum resources, entanglement \cite{PhysRevA.53.2046,PhysRevLett.78.5022,PhysRevLett.78.2275,PhysRevLett.80.2245,Horodecki09,PhysRevA.65.032314,plbnio2007introduction,horodecki2013quantumness} plays a key role in quantum teleportation \cite{PhysRevLett.70.1895}, quantum key distribution \cite{bb84,Ekert91}, and dense coding \cite{PhysRevLett.69.2881}, and also interprets the violation of Bell inequalities. Discord characterizes quantum correlations beyond entanglement \cite{PhysRevLett.88.017901,Maziero09,PhysRevLett.105.190502,PhysRevA.83.032324,gu2012observing,dakic2012quantum,Modi12}. It is the resource for remote state preparation \cite{dakic2012quantum}, and might explain the acceleration in discrete quantum computation with one qubit and other quantum computation circuits \cite{Datta09}.

Although these quantum resources play different roles in different tasks, the nature behind the resources might be the same. To find out such a non-classical nature, a natural idea is to build a unification framework of these quantum resources. Recently some researches have made progress for this goal \cite{Yao15,bu2017distribution,Hu17relative,yadin2016quantum}. Early researches in this field focus on the transformation between distillable entanglement and discord \cite{Piani11,Streltsov11}. Since the framework of coherence is proposed \cite{Baumgratz14}, there have been substantial attempts for unifying coherence and entanglement resource theory by designing protocols where these two resources can be converted into each other \cite{Streltsov15,qi2017measuring,chin2017generalized,PhysRevA.96.032316,PhysRevA.94.022329}. One example is that a single partite state with non-zero coherence is shown to be able to generate entanglement with bipartite incoherent operations \cite{Streltsov15}. Similar results are extended to discord and generalized to multipartite systems in \cite{Ma16}, where it is shown that the quantum discord created by multipartite incoherent operations is bounded by the quantum coherence consumed in its subsystems. Another connection between coherence and entanglement lies in quantum state merging \cite{horodecki2005partial}. A standard quantum state merging can lead to a gain of entanglement, while the incoherent quantum state merging \cite{Streltsov16} where one of the parties is restricted with local incoherent operations only, shows that entanglement and coherence cannot be gained at the same time.

All the works above are trying to connect part of these resources. Recently a unification of all the three resources based on an interferometric scenario is proposed in \cite{yuan2017unified}. Considering a phase encoding process of an input state, the interferometry power, i.e., how much phase information can be obtained is determined by the quantum resource contained in the input state. In such an interferometric framework, different quantum resources corresponds to the interferometry power in different scenarios.
Although coherence, discord, and entanglement are qualitatively unified in the interferometric framework, a quantitative unification is still an open problem.
%Whether the measures of the different resources

%We can further explore how to quantitatively unify the measures of these quantum resources.

%one resource will be quantitatively transformed into another. The mutual transformations of the measures of these resources will lead to a quantitative unification.

In this work, we construct such a quantitative unification of the three resources. We first review the general definitions of resource frameworks and summarize the corresponding definitions for coherence, discord, and entanglement. Then, we extend the single party coherence resource framework to the bipartite distributed scenario in several different ways. It turns out that one of the definitions is identical to basis-dependent (BD) discord \cite{yadin2016quantum,Ma16}.
We construct the resource framework of BD-discord, where we propose its operational meaning in quantum key distribution (QKD) and give some examples of BD-discord measures. With the help of BD-discord, measures of coherence, discord, and entanglement can be naturally defined and unified.
%Finally, we consider the operational meaning of the result in in quantum key distribution (QKD). When the communication partner know the explicit expression of their joint state, the BD-discord is exactly the raw key in QKD.
We believe our unified framework of quantum resources can make a substantial progress in understanding the quantum nature.

%We show that BD-discord is exactly the raw key if the communication partner know their joint state. We further show that the raw key becomes entanglement under an individual attack using our unification results, which is consistent with the conventional security analysis of QKD.

\section{Preliminaries}
In this section, we first review the definitions of a general resource framework. Then, we briefly summarize the coherence framework and refer the reader to Appendix~\ref{app:discordandentanglement} for a detailed review of discord and entanglement frameworks.

\subsection{Resource framework}
%[XXX] change it to a general resource framework
A general resource framework \cite{Baumgratz14,gour2015resource,PhysRevX.5.041008,du2015coherence,PhysRevLett.115.070503,PhysRevLett.116.120404,matera2016coherent,PhysRevLett.117.030401,Chitambar16,RevModPhys.89.041003,PhysRevLett.119.230401} consists of the definition of free state, free operation, and resource quantifiers.

\emph{Free state} is a set of states $\mathcal{F}$ that contain no resource while a state $\rho \notin \mathcal{F}$ contains resource.

\emph{Free operations} are physical realizable operations characterized by completely positive and trace preserving (CPTP) maps. They should at least transform free states only into free states, i.e, $\Lambda_{CPTP}(\rho)\in \mathcal{F}, \forall\rho\in \mathcal{F}$ which can be rewritten as $\sum_n K_n \rho K_n^\dag \in \mathcal{F}, \forall \rho\in \mathcal{F}$ in Kraus presentation. Here $\{K_n\}$ is the set of Kraus operators satisfying $\sum_n K_n^\dag K_n=I$. Different other free operations can be defined based on different extra requirements.

\emph{Quantifiers} are real-valued functions $f$ mapping states to non-negative real numbers. The free states should be mapped to zero, i.e., $f(\rho)=0, \forall \rho\in \mathcal{F}$. And for an arbitrary state, the function value should not increase under free operations, i.e., $f(\rho)\geq f(\Lambda_{CPTP}(\rho))$. Other principles are required for different resources and different tasks.

\subsection{Framework of coherence}
The general resource framework reduces to a specific one when we consider coherence, discord, and entanglement as the resource.
We briefly review the coherence framework introduced in ~\cite{Baumgratz14,RevModPhys.89.041003}, focusing on quantum states in a $d$-dimensional Hilbert space.

\emph{Incoherent and maximally coherent states.} Given a classical computational basis $J= \{\ket{j}\}$, $(j=1,2,\dots,d)$, an
incoherent state refers to a state without superposition on the basis, which can be described by
\begin{equation}\label{eq:incoherentstate}
 \sigma = \sum_{j=1}^d p_{j}\ket{j}\bra{j},
\end{equation}
where $p_{j_A}\in[0,1], \forall j$ and $\sum_j p_{j_A} = 1$.
At the meantime, maximally coherent states can be expressed as:
\begin{equation}\label{eq:maximallycoherentstate}
\ket{\Psi_d}=\frac{1}{d}\sum_{j=1}^d e^{i\phi_j}\ket{j},
\end{equation}
where $\phi_j\in[0,2\pi)$.

\emph{Incoherent operations.} Incoherent operations map an incoherent state only to an incoherent state. That is,
$\sum_n\hat{K}_n\rho\hat{K}_n^\dag \subset \mathcal{C}, \forall \rho \in \mathcal{C}$, where $\mathcal{C}$ is the set of incoherent states, $\{\hat{K}_n\}$ is a series of
Kraus operators satisfying $\sum_n \hat{K}_n^\dag\hat{K}_n=I$.

%Specifically, as an analogy of the definition of local operations and classical
%communication (LOCC), the incoherent completely positive and trace preserving (ICPTP) quantum operations can be defined as:
%\begin{equation}\label{eq:icptpmap}
%\Phi_{\mathrm{ICPTP}}(\rho)=\sum_n p_n\rho_n,
%\end{equation}
%where $p_n = \mathrm{Tr}[\hat{K}_n \rho \hat{K}_n^\dag]$ and $\rho_n =  \hat{K}_n \rho \hat{K}_n^\dag/p_n$

\emph{Coherence measures.} A coherence measure $C(\rho)$ is defined by a function that maps a quantum states $\rho$ to a real non-negative number, which satisfies the following
conditions in Table.~\ref{Fig:coherenceProperties}:

\begin{tbox}[label=Fig:coherenceProperties]{Properties of a coherence quantifier.}
\begin{enumerate}[(C1)]
\item
$C(\sigma)=0$ when $\sigma$ is an incoherent state. A stronger condition is (C1') $C(\sigma)=0$ if and only if $\sigma$ is an incoherent state;
\item
\emph{Monotonicity}: Coherence should not increase under incoherent operations, that is, (C2a) $C(\rho)\geq C[\Phi_{\mathrm{ICPTP}}({\rho})]$ , (C2b)
$C(\rho) \geq \sum_n p_n C(\rho_n)$, where $\rho_n=K_n \rho K_n^\dag/\mathrm{tr}(K_n \rho K_n^\dag)$;
\item
\emph{Convexity}: Coherence cannot increase under mixing, that is, $\sum_e p_e C(\rho_e) \geq C\left(\sum_e p_e \rho_e\right)$.
\end{enumerate}
\end{tbox}
We leave the framework of the other two quantum resources, discord and entanglement in Appendix~\ref{app:discordandentanglement}. %states and free operations together with those of coherence in Table.~\ref{Tab:resourcesum}
%\begin{tbox}[hbt]\label{Tab:resourcesum}
%\begin{tabular}{c|ccc}
%\hline
	%Resources & coherence & discord & entanglement \\
	%\hline
	%Free states & incoherent states & classical-quantum states & separable states \\
	%Free operations & incoherent operations & local operations & LOCC \\
	%\hline	
%\end{tabular}
%\caption{Brief summary of coherence, discord and entanglement.}
%\end{tbox}
%\begin{table}[htb]
%\begin{framed}
%\centering
%\begin{enumerate}[(C1)]
%\item
%$C(\sigma)=0$ when $\sigma$ is an incoherent state. A stronger condition is (C1') $C(\sigma)=0$ if and only if $\sigma$ is an incoherent state;
%\item
%\emph{Monotonicity}: Coherence should not increase under incoherent operations, that is, (C2a) $C(\rho)\geq C[\Phi_{\mathrm{ICPTP}}({\rho})]$ , (C2b)
%$C(\rho) \geq \sum_n p_n C(\rho_n)$;
%\item
%\emph{Convexity}: Coherence cannot increase under mixing, that is, $\sum_e p_e C(\rho_e) \geq C\left(\sum_e p_e \rho_e\right)$.
%\end{enumerate}
%\end{framed}
%\caption{Properties that a coherence measure should satisfy.} \label{Fig:coherenceProperties}
%\end{table}

\section{Extending coherence to the distributed scenario}
Quantum coherence is defined in the single party scenario while discord and entanglement are defined for at least two parties. Therefore, to unify the three measures, we should generalize coherence to multiple parties. In this section, we consider three approaches to generalize coherence to the bipartite distributed scenario, where we begin with three possible generalized definitions of the incoherent state.

\subsection{Incoherent-incoherent bipartite coherence}
A natural extension is the bipartite coherence proposed in \cite{Streltsov15}, which considers the joint basis $J_AJ_B=\{\ket{j_A}\ket{j_B}\}$ $(j_A=1,2,\dots, d_A, j_B=1,2,\dots,d_B)$ with $d_A$ and $d_B$ being dimensions of the local Hilbert spaces of system $A$ and $B$, respectively.
The bipartite incoherent state in can be rewritten as
\begin{equation}\label{eq:iistate}
\sigma_{AB}^{II}=\sum_{j_A,j_B}p_{j_Aj_B}\ket{j_A}\bra{j_A}\otimes\ket{j_B}\bra{j_B}.
\end{equation}
It is not hard to see that the bipartite incoherent state defined above is a specific type of classical-classical state $\sigma_{AB}^{CC}=\sum_{m,n}p_{mn}\ket{m}\bra{m}\otimes\ket{n}\bra{n}$ with certain local bases. Here we call Eq.~\eqref{eq:iistate} as \emph{incoherent-incoherent (II) state}. A bipartite state contains bipartite coherence if it is not an incoherent-incoherent state.

\subsection{Incoherent-classical bipartite coherence}
When focusing the coherence in a local basis of system $A$ (say $J_A$) and ignore the local basis of system $B$, we define the \emph{incoherent-classical  (IC)} state as
\begin{equation}
\sigma_{AB}^{IC}=\sum_{j_A,n}p_{j_An}\ket{j_A}\bra{j_A}\otimes\ket{n}\bra{n}	
\end{equation}
which is still a classical-classical state. Although $\ket{j_A}$ is still from the $J_A$ basis, $\ket{n}$ can be from an arbitrary basis of the system $B$.
Note that any incoherent-classical state can be obtained by applying a local unitary operation on system $B$ to a incoherent-incoherent state, i.e., $\sigma_{AB}^{IC}=U_B \sigma_{AB}^{II} U_B^\dag$. Therefore, the set of incoherent-classical states is larger than the set of incoherent-incoherent states.
A bipartite state contains incoherent-classical bipartite coherence if it is not an incoherent-classical state.

%Here classical state in party $B$ means a mixture of a set of orthogonal bases, which can be realized with a certain unitary operation on the incoherent bases $J_B$, i.e.,

\subsection{Incoherent-quantum bipartite coherence}
In the above generalization, we still consider the incoherent state as a classical-classical state. If we only focus on the  coherence in a local basis of system $A$ (say $J_A$), and totally ignore the other party ($B$), we can generalize coherence to be
incoherent-quantum (IQ) coherence \cite{bu2017distribution, Chitambar16assisted},
\begin{equation}\label{eq:iqstate}
	\sigma_{AB}^{IQ}=\sum_{j_A=1}^{d_A}p_{j_A}\ket{j_A}\bra{j_A}\otimes \rho_B^{j_A}.
\end{equation}
Equivalently, it can be written as
\begin{equation}\label{eq:iqstate1}
\sigma_{AB}^{IQ}=\sum_{j_A}p_{j_A}\ket{j_A}\bra{j_A}\otimes \left(\sum_{l_{j_A}}l_{j_A}\ket{l_{j_A}}\bra{l_{j_A}}\right),	
\end{equation}
with a spectral decomposition in party $B$. It is not hard to see that incoherent-quantum state can be obtained by mixing incoherent-classical states. Or we can regard the set of incoherent-quantum state as the convex hull of the set of the incoherent-classical states. A bipartite state contains incoherent-quantum bipartite coherence if it is not an incoherent-quantum state.
%Consider an incoherent-incoherent state with the same spectrum in $J_B$, $\sigma_{AB}^{II}=\sum_{j_A}p_{j_A}\ket{j_A}\bra{j_A}\otimes (\sum_{l_{j_A}}\lambda_{l_{j_A}}\ket{j_B}\bra{j_B})$ and a unitary operation depending on $j_A$, $U_{j_A}\ket{j_B}=\ket{l_{j_A}}$, we can see that an incoherent-quantum state can be realized with a set of probabilistic unitary operations on an incoherent-incoherent state, the probability depends on the decomposition probability of the incoherent-quantum state $p_{j_A}$.

We generalize the bipartite coherence in distributed scenarios with the track of $\sigma_{AB}^{II}\rightarrow\sigma_{AB}^{IC}\rightarrow\sigma_{AB}^{IQ}$. The incoherent-incoherent state is a subset of incoherent-classical state which is further a subset of incoherent-quantum state. We illustrate the relationship of these states in Fig.~\ref{fig:classification2}. The incoherent-quantum bipartite coherence is actually identical to the basis-dependent (BD) discord \cite{yadin2016quantum,Ma16}, which is the key resource for our unification framework.
\begin{figure}[hbt]
\centering
\includegraphics[width=8 cm]{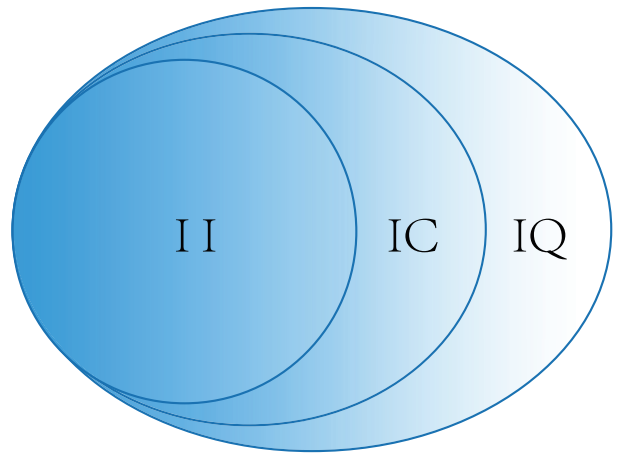}
\caption{Illustration of free states for different types of bipartite coherence. $II$: incoherent-incoherent states; $IC$: incoherent-classical states; $IQ$: incoherent-quantum states.} %A logarithmic function $\log_2{(2\mu+1)}$ will be between the two curves when $\mu$ is large enough.
\label{fig:classification2}
\end{figure}
%\paragraph{BD-discord}

\section{basis-dependent discord}
\subsection{Framework of basis-dependent discord}
The concept of BD-discord has been proposed in \cite{yadin2016quantum,Ma16} when studying discord. Here we formulate its resource framework, beginning with defininitions of free states for BD-discord given a local computational basis $J_A=
\{\ket{j_A}\} (j=1,2,\dots,d_A)$ on system A.

\begin{definition}
A zero basis-dependent discord state in $J_A=
\{\ket{j_A}\}$ is an incoherent-quantum state in Eq.~\eqref{eq:iqstate}
\end{definition}
%Meanwhile, a locally maximally quantum state can be expressed as:
%\begin{equation}\label{eq:locallymaximallycoherentstate}
%\ket{\Psi}=\ket{\Psi_d}\otimes \ket{\phi_B}= \left(\frac{1}{d}\sum_{j_A=1}^d e^{i\phi_{j_A}}\ket{j_A}\right)\otimes \ket{\phi_B},
%\end{equation}
%where $\phi_{j_A}\in[0, 2\pi]$ and $\ket{\phi_B}$ is an arbitrary state of system $B$. Note that, in general, the state of system $B$ can also be mixed as long as the state of system $A$ is maximally coherent. Besides the form of Eq.~\eqref{eq:locallymaximallycoherentstate}, we will show as follows that there exist other maximally quantum states that is not separable.

Second we define free operations for BD-discord, which map incoherent-quantum states to incoherent-quantum states.
\begin{definition}
The free operations for BD-discord are separable-quantum-incoherent (SQI) operations \cite{Streltsov17}
\begin{equation}
\Lambda_{SQI}(\sigma^{IQ}_{AB})=\sum_n{\hat{A}_n\otimes \hat{B}_n}\sigma_{AB}{\hat{A}^\dag_n\otimes \hat{B}^\dag_n} \subset \delta_{IQ},
\end{equation}
where $\delta_{IQ}$ is the set of incoherent-quantum states,
${\hat{A}_n\otimes \hat{B}_n}$ is a series of Kraus operators satisfying the completeness condition $\sum_n \hat{A}_n^\dag  \hat{A}_n\otimes \hat{B}_n^\dag \hat{B}_n=I$, and $\{\hat{A}_n\}$ is a set of incoherent operations on $A$.
\end{definition}

Finally we define the measures of BD-discord,  $BD_{J_A}\left(\rho_{AB}\right)$, which map a bipartite quantum states $\rho_{AB}$ to a real
non-negative number, satisfying the conditions in Table \ref{Fig:quantumProperties}.

\begin{tbox}[label=Fig:quantumProperties]{Properties of a basis-dependent discord quantifier.}
\begin{enumerate}[(BD1)]
\item
Basis-dependent discord vanishes for incoherent-quantum state $\sigma^{J_A}_{AB} = \sum_{j_A=1}^{d_A} p_{j_A}\ket{j_A}\bra{j_A} \otimes \rho^{j_A}_{B}$
\item
\emph{Monotonicity}: Basis-dependent discord should not increase under SQI operations, i.e., $  BD_{J_A}(\Lambda_{SQI}(\rho_{AB}))\le BD_{J_A}(\rho_{AB})$
\item
Basis-dependent discord is invariant under a local incoherent unitary operation on $A$ and a unitary operation on $B$
\end{enumerate}
\end{tbox}

\subsection{Examples of basis-dependent discord measures}
Here we give two categories of BD-discord measures that fulfill the conditions in Table.~\ref{Fig:quantumProperties}. One is the distance-based measure. The BD-discord equals to the distance from IQ states, which is expressed as
\begin{equation}\label{eq:distancemeasure}
	BD_{J_A}(\rho_{AB})=\min_{\sigma_{AB}^{IQ}\in \delta_{IQ}}d(\rho_{AB}|| \sigma_{AB}^{IQ}).
\end{equation}
Specifically, the distance can be various of measures given in Table~\ref{tab:bddiscordmeasures}, where the superscript in $\rho_{AB}^{Adiag}$ means a local dephasing operation on $A$. Actually these measures are widely used in entanglement, discord and coherence.

The other is the convex roof of local randomness,
\begin{equation}\label{eq:convexmeasure}
	BD_{J_A}(\rho_{AB})=\min_{p_e, \ket{\Psi_{AB}}_e}\sum_e p_e R(\ket{\Psi_{AB}}_e)
\end{equation}
where the minimization is over all possible pure state decompositions of $\rho_{AB}$, and $R(\ket{\Psi_{AB}}_e)$ is the local randomness given by von Neumman entropy of party $A$
\begin{equation}\label{Eq:localrandomness}
	R(\ket{\Psi_{AB}}_e)=S(\sum_{j_A}\bra{j_A}\mathrm{tr}_B(\rho_{AB})\ket{j_A}\ket{j_A}\bra{j_A})
\end{equation}
We prove that Eq.~\eqref{eq:distancemeasure} and Eq.~\eqref{eq:convexmeasure} satisfy all conditions of a BD-discord measure in Appendix~\ref{app:proofmeasure}.
%The relationship between free states for basis-dependent discord, discord and entanglement has been investigated in \cite{yuan2017unified}. The classical-quantum states are unions of incoherent-quantum states $\sigma_{AB}^{CQ}=\bigcup\sigma_{AB}^{IQ}$ over all local bases $J_A$, while the separable states are convex combinations of incoherent-quantum states $\sigma_{AB}^{sep.}=\sum_{\lambda} \lambda \sigma_{AB}^{IQ}$. Such relationships are also shown in Fig.~\ref{fig:classification1}. Inspired by this, we give the quantitative unifications of measures of basis-dependent discord, discord and entanglement, and leave the detailed proofs in Appendix.

\begin{tbox}[label=tab:bddiscordmeasures]{Some possible measures of basis-dependent discord.}
\begin{enumerate}[(1)]
\item
relative entropy, $S(\rho_{AB}^{Adiag})-S(\rho_{AB})$
\item
$l_1$ norm, $\underset{\sigma_{AB}^{IQ}\in \delta_{IQ}}\min||\rho_{AB}-\sigma_{AB}^{IQ} ||_{l_1}$
\item
geometric measure, $1-\underset{{\sigma_{AB}^{IQ}\in \delta_{IQ}}}\max F(\rho_{AB},\sigma_{AB}^{IQ})$
\item
fidelity measure, $1-\underset{{\sigma_{AB}^{IQ}\in \delta_{IQ}}}\max\sqrt{F(\rho_{AB},\sigma_{AB}^{IQ})}$
\end{enumerate}
\end{tbox}

\subsection{Operational meaning of the basis-dependent discord}
In this section, we consider the operational meaning of BD-discord, which is the local randomness of the raw key in QKD.
In the QKD security analysis, the communication partners, Alice and Bob, share a bipartite state $\rho_{AB}$, while the adversary Eve, is assumed to hold a purification $\ket{\Psi_{ABE}}$ of Alice's and Bob's system $AB$, which enables her to obtain the most information. The Devetak-Winter formula \cite{devetak2005distillation} gives an asymptotic key rate with one-way direct reconciliation. When $\rho_{AB}$ is known to Alice and Bob, the formula is expressed as
\begin{equation}\label{eq:devetakwinter1}
K=S(Z_A|E)-S(Z_A|Z_B)
\end{equation}
where $S(\cdot)$ is the von Neumann entropy function and $Z_{A(B)}$ is a local key generation measurement expressed as $\{\ket{j_{A(B)}}\bra{j_{A(B)}}\}$, $j_{A(B)}=1,2,\cdots d_{A(B)}$. We will show that the first term in Eq.~\eqref{eq:devetakwinter1} is actually a basis-dependent discord measure.
\begin{proposition}
The local randomness in QKD, i.e., the conditional entropy $S(Z_A|E)$ in the Devetak-Winter formula, is a BD-discord measure.
\end{proposition}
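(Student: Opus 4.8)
The plan is to identify the local randomness $S(Z_A|E)$ with the relative-entropy BD-discord measure $S(\rho_{AB}^{Adiag})-S(\rho_{AB})$, i.e.\ entry (1) of Table~\ref{tab:bddiscordmeasures}, and then to invoke the verification carried out in Appendix~\ref{app:proofmeasure} that this distance-based functional already satisfies (BD1)--(BD3). This route collapses the claim onto a single entropic identity, so that monotonicity under SQI operations and invariance under local incoherent unitaries on $A$ need not be re-derived for $S(Z_A|E)$ from scratch.

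First I would write down the classical--quantum state generated by Alice's key-generation measurement $Z_A$. Starting from Eve's purification $\ket{\Psi_{ABE}}$ of $\rho_{AB}$, projecting $A$ onto $\ket{j_A}$ occurs with probability $p_{j_A}=\bra{j_A}\rho_A\ket{j_A}$ and leaves $BE$ in a pure conditional state $\ket{\psi^{j_A}_{BE}}$; tracing out $B$ then yields Eve's conditional state $\rho_E^{j_A}$. The resulting state is $\rho_{Z_AE}=\sum_{j_A}p_{j_A}\ket{j_A}\bra{j_A}\otimes\rho_E^{j_A}$, so that by definition $S(Z_A|E)=S(\rho_{Z_AE})-S(\rho_E)$.

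Next I would evaluate the two entropies using the fact that $E$ purifies $AB$. Orthogonality of the classical register splits the joint entropy as $S(\rho_{Z_AE})=-\sum_{j_A}p_{j_A}\log p_{j_A}+\sum_{j_A}p_{j_A}S(\rho_E^{j_A})$; purity of $\ket{\psi^{j_A}_{BE}}$ gives $S(\rho_E^{j_A})=S(\rho_B^{j_A})$ via the Schmidt decomposition; and globally $S(\rho_E)=S(\rho_{AB})$. Writing $\rho_{AB}^{Adiag}=\sum_{j_A}p_{j_A}\ket{j_A}\bra{j_A}\otimes\rho_B^{j_A}$ for the local dephasing of $\rho_{AB}$ on $A$, whose block-diagonal entropy is $S(\rho_{AB}^{Adiag})=-\sum_{j_A}p_{j_A}\log p_{j_A}+\sum_{j_A}p_{j_A}S(\rho_B^{j_A})$, and comparing terms gives
\begin{equation}
S(Z_A|E)=S(\rho_{AB}^{Adiag})-S(\rho_{AB}),
\end{equation}
which is exactly measure (1). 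The proposition is then immediate; as a consistency check, (BD1) holds because an incoherent-quantum state is invariant under the dephasing, so the right-hand side vanishes.

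The hard part is not the algebra but justifying that the purification assumption turns $S(Z_A|E)$ into a function of $\rho_{AB}$ alone. I would need to argue that the quantity is independent of which purification Eve holds---any two purifications of $\rho_{AB}$ differ by an isometry on $E$, under which the von Neumann entropy is invariant---and that the spectrum of Eve's conditional state $\rho_E^{j_A}$ coincides with that of Bob's conditional state $\rho_B^{j_A}$, which is precisely where purity of $\ket{\psi^{j_A}_{BE}}$ enters. Without these two facts $S(Z_A|E)$ would not reduce to the relative-entropy measure and the identification would fail, so establishing them cleanly is the crux of the argument.
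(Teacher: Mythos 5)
Your proof is correct and follows essentially the same route as the paper: both reduce $S(Z_A|E)$ to the relative-entropy measure $S(\rho_{AB}^{Adiag})-S(\rho_{AB})$ by exploiting purity of the conditional states $\rho_{BE}^{j_A}$ (so $S(\rho_E^{j_A})=S(\rho_B^{j_A})$) and purity of the global state $\ket{\Psi_{ABE}}$ (so $S(\rho_E)=S(\rho_{AB})$), thereby identifying $S(Z_A|E)$ with entry (1) of Table~\ref{tab:bddiscordmeasures}. Your two added remarks are minor refinements of the same argument: the purification-independence observation is implicit in the paper, and your classical--quantum entropy decomposition carries the correct plus sign where the paper's displayed equation has a sign typo ($H(\{p_{j_A}\})-\sum_{j_A}p_{j_A}S(\rho_E^{j_A})$ should read $H(\{p_{j_A}\})+\sum_{j_A}p_{j_A}S(\rho_E^{j_A})$), though this does not affect the paper's conclusion since the same convention is applied to both sides.
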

\begin{proof}
The conditional entropy $S(Z_A|E)$ can be expressed as
\begin{equation}\label{eq:conditionalentropy}
S(Z_A|E)=S(\rho_{AE}^{Adiag})-S(\rho_E).
\end{equation}
Suppose the tripartite state after Alice's local measurement is $\rho_{ABE}^{Adiag}=\sum_{j_A} p_{j_A} \ket{j_A}\bra{j_A} \otimes \rho_{BE}^{j_A}$, where $\rho_{BE}^{j_A}$ is a pure state since $\ket{\Psi_{ABE}}$ is a pure state, then
\begin{equation}
\begin{aligned}
\rho_{AB}^{Adiag}& =\mathrm{tr}_E({\rho_{ABE}^{Adiag}}) \\
& =\sum_{j_A} p_{j_A} \ket{j_A}\bra{j_A} \otimes  \mathrm{tr}_E(\rho_{BE}^{j_A}) \\
& =\sum_{j_A} p_{j_A} \ket{j_A}\bra{j_A} \otimes  \rho_B^{j_A}
\end{aligned}
\end{equation}
and $\rho_{AE}^{Adiag}$ has a similar expression of $\rho_{AE}^{Adiag}=\sum_{j_A} p_{j_A} \ket{j_A}\bra{j_A} \otimes  \rho_E^{j_A}$. Consider the von Neumann entropy of a classical-quantum state,
\begin{equation}
\begin{aligned}
S(\rho_{AE}^{Adiag})& =H(\{p_{j_A}\})-\sum_{j_A}p_{j_A}S(\rho_{E}^{j_A}) \\
& =H(\{p_{j_A}\})-\sum_{j_A}p_{j_A}S(\rho_{B}^{j_A}) \\
& =S(\rho_{AB}^{Adiag}),
\end{aligned}
\end{equation}
where $H(\cdot)$ is the Shannon entropy function and the second equality uses the fact that $S({\rho_B^{j_A}})$=$S({\rho_E^{j_A}})$ when $\rho_{BE}^{j_A}$ is a pure state, then Eq.~\eqref{eq:conditionalentropy} becomes
\begin{equation}\label{eq:relativeentropybddiscord}
\begin{aligned}
S(Z_A|E)&=S(\rho_{AB}^{Adiag})-S(\rho_{E})\\
&=S(\rho_{AB}^{Adiag})-S(\rho_{AB})\\
&=BD_{J_A}(\rho_{AB})
\end{aligned}
\end{equation}
where the last equation is the relative entropy measure of basis-dependent discord given in Table.~\ref{tab:bddiscordmeasures}.
\end{proof}

\section{Unifying measures of quantum resources}
With the help of the framework of BD-discord, now we are ready to unify the measures of different quantum resources.
\subsection{BD-discord to coherence}
%\subsection{Alternative definitions of basis dependent discord}
In previous section, BD-discord is extended from bipartite coherence. And now we redefine the original single partite coherence \cite{Baumgratz14} with BD-discord.

%First we extend the measures of single party coherence to that of the BD discord.  The BD discord is the resource of the assisted coherence scenario \cite{Chitambar16} where a bipartite state $\rho_{AB}$ is hold by two parties Alice and Bob, and Bob helps Alice to generate coherence in party $A$ as much as possible. Bob can perform arbitrary quantum operations while Alice is restricted to perform incoherent operations \cite{Chitambar16}. Under such a scenario, we can extend any coherence measure to a BD discord measure as described by the k  following theorem.

\begin{theorem}\label{Theorem:coherence}
%A coherence monotone

The BD-discord measure of a tensor product state $\rho_A\otimes \rho_B$ is a coherence monotone of $\rho_A$, i.e.,
\begin{equation}\label{eq:bddiscordtocoherence}
C_{J_A}(\rho_A)=BD_{J_A}(\rho_A \otimes \rho_B)
\end{equation}
For simplicity, we can calculate the coherence of $\rho_A$ by $BD_{J_A}(\rho_A \otimes I_B)$, where $I_B$ is an identity matrix of $B$. If $BD_{J_A}(\rho_A \otimes \rho_B)$ is further convex over $\rho_A \otimes \rho_B$, $C_{J_A}(\rho_A)$ becomes a coherence measure.
\end{theorem}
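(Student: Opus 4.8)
The plan is to verify the three coherence axioms (C1)--(C3) for $C_{J_A}(\rho_A)=BD_{J_A}(\rho_A\otimes\rho_B)$ by pulling each one back, through the embedding $\rho_A\mapsto\rho_A\otimes\rho_B$, to the corresponding basis-dependent-discord axiom (BD1)--(BD3). The single structural fact that drives everything is that this embedding sends incoherent states of $A$ to incoherent-quantum states of $AB$, and sends incoherent operations on $A$ to SQI operations on $AB$: if $\{K_n\}$ is an incoherent Kraus set on $A$ with $\sum_n K_n^\dagger K_n=I_A$, then $\{K_n\otimes I_B\}$ is a legitimate SQI Kraus set (incoherent on $A$, identity on $B$, completeness preserved), acting as $\Phi_{\mathrm{ICPTP}}(\rho_A)\otimes\rho_B$ on a product input. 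The argument is uniform in the choice of the fixed $\rho_B$, which is why one may replace it by $I_B$ for computation.

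First I would dispatch (C1): if $\rho_A$ is incoherent then $\rho_A\otimes\rho_B$ is manifestly of the incoherent-quantum form $\sum_{j_A}p_{j_A}\ket{j_A}\bra{j_A}\otimes\rho_B$, so (BD1) forces $C_{J_A}(\rho_A)=0$. Next, (C2a) is immediate from the structural fact: applying the lifted SQI map $\{K_n\otimes I_B\}$ and invoking deterministic monotonicity (BD2) gives $BD_{J_A}(\Phi_{\mathrm{ICPTP}}(\rho_A)\otimes\rho_B)\le BD_{J_A}(\rho_A\otimes\rho_B)$, i.e. $C_{J_A}(\Phi_{\mathrm{ICPTP}}(\rho_A))\le C_{J_A}(\rho_A)$. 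For the convexity claim (C3) under the stated hypothesis I would use that $\rho_A\mapsto\rho_A\otimes\rho_B$ is affine: for $\rho_A=\sum_e p_e\rho_A^e$ we have $\rho_A\otimes\rho_B=\sum_e p_e(\rho_A^e\otimes\rho_B)$, so convexity of $BD_{J_A}$ over $\rho_A\otimes\rho_B$ transfers verbatim to convexity of $C_{J_A}$.

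The real work is the strong monotonicity (C2b), $\sum_n p_n C_{J_A}(\rho_n)\le C_{J_A}(\rho_A)$ with $\rho_n=K_n\rho_A K_n^\dagger/p_n$. Here I would use a flagged SQI operation that records the outcome $n$ in an ancilla $B'$ on Bob's free side, with Kraus operators $K_n\otimes(\ket{n}_{B'}\otimes I_B)$; this is again a valid SQI map since $\hat{B}_n^\dagger\hat{B}_n=I_B$, and its output is the block state $\tilde\rho=\sum_n p_n\,\rho_n\otimes\ket{n}\bra{n}_{B'}\otimes\rho_B$. Deterministic monotonicity (BD2) gives $BD_{J_A}(\tilde\rho)\le C_{J_A}(\rho_A)$, and since appending a fixed flag on $B$ is reversible by a local operation on $B$ it leaves the discord unchanged, so each block satisfies $BD_{J_A}(\rho_n\otimes\ket{n}\bra{n}_{B'}\otimes\rho_B)=C_{J_A}(\rho_n)$. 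What remains --- and this is the step I expect to be the main obstacle --- is the flag-additivity identity $BD_{J_A}(\tilde\rho)=\sum_n p_n C_{J_A}(\rho_n)$, namely that an ensemble carrying orthogonal, perfectly distinguishable flags on the quantum party contributes its average discord. Its $\le$ half is just convexity, so the delicate half is $BD_{J_A}(\tilde\rho)\ge\sum_n p_n C_{J_A}(\rho_n)$; I would obtain it either from a selective, outcome-resolved strengthening of SQI monotonicity --- reading the flag by a projective measurement on $B'$ produces exactly the post-selected blocks with weights $p_n$ --- or, absent such an axiomatic version, by checking flag-additivity directly for the explicit measures of Table~\ref{tab:bddiscordmeasures} (for the relative-entropy measure it follows from additivity of von Neumann entropy across the classical flag, and there $C_{J_A}(\rho_A)$ in fact collapses to the familiar relative entropy of coherence $S(\rho_A^{Adiag})-S(\rho_A)$). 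Pinning down this flag property is precisely where the abstract statement, as opposed to the case-by-case verification, demands the most care.
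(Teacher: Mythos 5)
Your proof of the parts the paper actually proves --- (C1), (C2a), and the conditional (C3) --- coincides with the paper's own argument: incoherent states of $A$ embed as IQ states, so (BD1) gives (C1); an incoherent Kraus set $\{K_n\}$ on $A$ lifts to the SQI set $\{K_n\otimes I_B\}$, so (BD2) gives (C2a) (the paper phrases this with the general product lift acting as $\Lambda_{IO}(\rho_A)\otimes\Phi(\rho_B)$ on product inputs, of which your $\Phi=\mathrm{id}$ is a special case); and affinity of $\rho_A\mapsto\rho_A\otimes\rho_B$ transfers the assumed convexity to $C_{J_A}$. The genuine difference is your fourth step. The paper never addresses strong monotonicity (C2b): its proof closes by declaring that a coherence monotone with convexity is a coherence measure, which sits uneasily with its own Table~\ref{Fig:coherenceProperties}, where a measure must satisfy (C2b) as well. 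You correctly identify (C2b) as the real content and reduce it, via the flag construction $K_n\otimes(\ket{n}_{B'}\otimes I_B)$ (which is indeed a legitimate SQI map, and appending or discarding the flag is a free operation on $B$), to a flag-additivity identity for $BD_{J_A}$. The obstruction you isolate is real: Table~\ref{Fig:quantumProperties} contains only deterministic monotonicity, so the inequality $BD_{J_A}(\tilde\rho)\ge\sum_n p_n C_{J_A}(\rho_n)$ cannot be derived from (BD1)--(BD3) alone; it must either be postulated as a selective strengthening of (BD2) or verified measure by measure, as you propose --- and your check for the relative-entropy measure is correct, since there $C_{J_A}(\rho_A)=S(\rho_A^{Adiag})-S(\rho_A)$ and the classical flag register contributes $H(\{p_n\})$ to both entropies, so it cancels. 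In short: your proposal establishes everything the paper's proof establishes, by essentially the same route, and additionally makes explicit (without fully closing at the axiomatic level) the (C2b) requirement that the paper's proof of the ``coherence measure'' half silently omits; what you present as a gap in your own argument is in fact a gap shared, and hidden, by the paper.
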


\begin{proof}
First, for an incoherent state $\sigma_A=\sum_{j_A}p_{j_A}\ket{j_A}\bra{j_A}$, $\sigma_A\otimes \rho_B =\sum_{j_A}p_{j_A}\ket{j_A}\bra{j_A} \otimes \rho_B$ is an IQ state. Then the rhs of Eq.~\eqref{eq:bddiscordtocoherence} equals to zero, which means $C_{J_A}(\sigma_A)=0$ for an incoherent state $\sigma_A$.

%IQ states $\sigma_{AB}^{IQ}$ given in Eq.~\eqref{eq:iqstate}, $f(\sigma_{AB}^{IQ})=0$. The partial trace operation belongs to $\Lambda_{SQI}$. And $f(\mathrm{tr}_B(\sigma_{AB}^{IQ}))\leq f(\sigma_{AB}^{IQ})=0$, which means $f(\sum_{j_A}p_{j_A}\ket{j_A}\bra{j_A})=0$.

Second, according to the contractivity of a BD-discord measure under $\Lambda_{SQI}$, $BD_{J_A}(\Lambda_{SQI}(\rho_A \otimes \rho_B))=BD_{J_A}(\Lambda_{IO}(\rho_A) \otimes \Phi(\rho_B)) \leq BD_{J_A}(\rho_A \otimes \rho_B)$, where $\Lambda_{IO}$ is an incoherent operation and $\Phi$ is an arbitrary operation. Then $C_{J_A}(\Lambda_{IO}(\rho_A))\leq C_{J_A}(\rho_A)$, which means $C_{J_A}(\rho_A)$ is contractive under $\Lambda_{IO}$.

%from the conditions of a BD-discord measure we have $f(\Lambda_{SQI}(\rho_{AB}))\leq f(\rho_{AB})$. If we take partial trace (notated as $\Lambda_{SQI}^0$) on both sides of the inequality, then the inequality still holds since $f(\Lambda_{SQI}^0 \circ \Lambda_{SQI}(\rho_{AB}))\leq f(\Lambda_{SQI}^0 (\rho_{AB}))$, i.e., $f(\Lambda_{IO}(\rho_A))\leq f(\rho_A)$ where $\Lambda_{IO}(\rho_A)=\sum_n \hat{A}_n \rho_A \hat{A}_n^\dag$ is an incoherent operation on $A$.

Finally, If $BD_{J_A}(\rho_A \otimes \rho_B)$ is convex over $\rho_A \otimes \rho_B$, i.e., $BD_{J_A}(\rho_A \otimes \rho_B) \leq \sum_n p_n BD_{J_A}(\rho_A^n \otimes \rho_B^n)$, where $\rho_A \otimes \rho_B=\sum_n p_n \rho_A^n \otimes \rho_B^n$. Then $C_{J_A}(\rho_A) \leq \sum_n p_n C_{J_A}(\rho_A^n)$, which shows the convexity of $C_{J_A}(\rho_A)$. A coherence monotone with convexity is a coherence measure.
\end{proof}

%We can see that basis-dependent discord is also related to single partite coherence.
\subsection{BD-discord to discord}
Furthermore, we can define a discord measure from any BD-discord measure. The free state for discord we consider here is the classical-quantum state, i.e.,
\begin{equation}
  \sigma^{CQ}_{AB} = \sum_n p_n\ket{n}\bra{n}\otimes\rho_B^n,
\end{equation}
where $\{\ket{n}\}$ is orthogonal for different $n$, $p_n\in[0,1], \forall n$ and $\sum_n p_n = 1$. As the set of classical-quantum state contains all the incoherent-quantum state in different local bases, one can regard discord as a basis-independent version of BD-discord. Based on such an intuition, we can define a discord measure by Theorem~\ref{Theorem:discord}.
\begin{theorem}\label{Theorem:discord}
A discord measure is a minimization of BD-discord measure over local bases, i.e.,
\begin{equation}\label{eq:discord}
%\begin{aligned}
	D(\rho_{AB})=\min_{U_A}BD_{J_A}(U_A\otimes I \rho_{AB}U^\dag_A \otimes I)
%\end{aligned}
\end{equation}
\end{theorem}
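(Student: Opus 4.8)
The plan is to verify that the quantity
$D(\rho_{AB})=\min_{U_A}BD_{J_A}(U_A\otimes I\,\rho_{AB}\,U_A^\dagger\otimes I)$
satisfies the three defining conditions of a discord measure: that it vanishes exactly on classical-quantum states, that it is monotone under the free operations of the discord framework, and (if required) that it is convex. The key observation driving all three steps is that a classical-quantum state is, by definition, an incoherent-quantum state in \emph{some} local basis of $A$: writing $\sigma^{CQ}_{AB}=\sum_n p_n\ket{n}\bra{n}\otimes\rho_B^n$, there exists a unitary $U_A$ rotating the (orthonormal) $\{\ket{n}\}$ onto the computational basis $J_A$, so that $U_A\otimes I\,\sigma^{CQ}_{AB}\,U_A^\dagger\otimes I$ is an $IQ$ state of the form in Eq.~\eqref{eq:iqstate}.

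\textbf{Vanishing on free states.} First I would show $D(\sigma^{CQ}_{AB})=0$. Given a classical-quantum state, pick the $U_A$ above; then the rotated state is incoherent-quantum, so $BD_{J_A}$ of it is zero by condition (BD1), and since $D$ is a minimum of non-negative quantities attaining the value zero, $D(\sigma^{CQ}_{AB})=0$. For the converse direction (the ``only if'' that makes $D$ faithful), if $D(\rho_{AB})=0$ then the minimizing $U_A$ makes $BD_{J_A}(U_A\otimes I\,\rho_{AB}\,U_A^\dagger\otimes I)=0$; assuming the underlying $BD_{J_A}$ is faithful, this forces the rotated state to be $IQ$, hence $\rho_{AB}$ is classical-quantum after rotating back by $U_A^\dagger$. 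This reduces faithfulness of $D$ to faithfulness of the chosen $BD_{J_A}$ measure.

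\textbf{Monotonicity.} Next I would establish invariance under the appropriate free operations. The cleanest part is invariance under local unitaries on $A$: for any unitary $V_A$, the change of variables $U_A\mapsto U_A V_A^\dagger$ shows the minimization over all $U_A$ is unchanged, so $D(V_A\otimes I\,\rho_{AB}\,V_A^\dagger\otimes I)=D(\rho_{AB})$. Condition (BD3) already gives invariance of $BD_{J_A}$ under unitaries on $B$, which carries through the minimization. For genuine monotonicity under the discord-free operations one uses (BD2): for a fixed realization, if $\Lambda$ is a discord-free operation whose action on $A$ can be absorbed into a relabelling of the optimal basis, then the optimal $U_A$ for the output can be pulled back to a feasible (not necessarily optimal) $U_A$ for the input, giving $D(\Lambda(\rho_{AB}))\le D(\rho_{AB})$ by the SQI-monotonicity (BD2) of the inner measure together with the infimum being over a fixed set.

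\textbf{Convexity and the main obstacle.} Convexity, if claimed, would follow from convexity of the inner $BD_{J_A}$ by the standard fact that a pointwise minimum over a parameter (here $U_A$) of a jointly convex family is convex in the state, taking care that the minimization is over the compact unitary group so the minimum is attained. The step I expect to be the genuine obstacle is monotonicity: the discord-free operations are a strictly larger and differently structured class than the SQI operations for which (BD2) is stated, so a free operation for discord need not act as an \emph{incoherent} operation on $A$ in any fixed basis. Making the pullback argument rigorous requires arguing that the optimal local basis for the input can be matched to the action of $\Lambda$ on $A$ — essentially commuting the basis-optimization past the free operation. I would handle this by exploiting that discord-free operations include an optimization/measurement structure on $A$ that, combined with the basis minimization, lets one choose $U_A$ so the composite becomes SQI, after which (BD2) applies directly; pinning down exactly which class of discord operations the statement intends is the crux and should be isolated as the first thing to fix.
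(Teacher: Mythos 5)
There is a genuine gap, and it sits exactly where you flagged your ``main obstacle'': you never complete the monotonicity step, and the difficulty you describe there rests on a misreading of the discord framework this paper uses. In this paper (Appendix~\ref{app:discordandentanglement}, Table~\ref{Fig:discordProperties}), the free operations for discord are \emph{not} a larger, differently structured class than SQI operations; they are local operations on $B$ alone, $I_A\otimes\Phi_B$. Such maps act trivially on $A$, so they are a \emph{subset} of the SQI operations (take every $\hat A_n$ proportional to the identity, which is incoherent). Consequently there is no need to ``commute the basis-optimization past the free operation'' in any delicate way: for each fixed $U_A$ one has
\begin{equation}
U_A\otimes I\,\bigl(I\otimes\Phi_B(\rho_{AB})\bigr)\,U_A^\dag\otimes I
= I\otimes\Phi_B\bigl(U_A\otimes I\,\rho_{AB}\,U_A^\dag\otimes I\bigr),
\end{equation}
so (BD2) gives $BD_{J_A}\bigl[I\otimes\Phi_B(U_A\otimes I\,\rho_{AB}\,U_A^\dag\otimes I)\bigr]\le BD_{J_A}\bigl(U_A\otimes I\,\rho_{AB}\,U_A^\dag\otimes I\bigr)$ for every $U_A$, and minimizing both sides over $U_A$ yields $D(I_A\otimes\Phi_B(\rho_{AB}))\le D(\rho_{AB})$. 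This is precisely the paper's proof of (D2), and it is the easiest of the three conditions, not the crux. Your hypothetical construction involving measurement structure on $A$ and matching the optimal basis to ``the action of $\Lambda$ on $A$'' addresses a problem that does not arise under the paper's definitions.

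Your other two steps are essentially sound and match, or slightly improve on, the paper. (D1) is argued identically: rotate the CQ basis onto $J_A$ and invoke non-negativity. For (D3), your route --- change of variables $U_A\mapsto U_A V_A^\dag$ for the $A$-unitary, plus (BD3) for the $B$-unitary carried through the minimization --- is in fact cleaner than the paper's, which instead derives $B$-unitary invariance by applying its (D2) twice, once to $I\otimes U_B$ and once to $I\otimes U_B^\dag$. Finally, note that the faithfulness (``only if'') discussion and the convexity argument you include are not required: the paper's discord axioms (D1)--(D3) demand neither, so that material, while not wrong, is extraneous to the theorem being proved.
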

We leave the proof in Appendix \ref{app:proofdiscord}.

\subsection{BD-discord to entanglement}
To define entanglement measures from BD-discord measures, we consider the strong adversary scenario in \cite{yuan2017unified}. For a given input state $\rho_{AB}$, some phase information is encoded in the local basis $J_A$, i.e., by a local operation of $U_{A}=\sum_{j_A=1}^{d_A}e^{i\phi_{j_A}}\ket{j_A}\bra{j_A}$. After the phase encoding, a joint measurement is performed on both $A$ and $B$ to extract the phase information. It turns out that the interferometry power, i.e., how much phase information can be extracted corresponds to the BD-discord of the input state $\rho_{AB}$. A strong adversary holds a purification of $\rho_{AB}$ with $\rho_{AB}=\mathrm{tr}_E(\ket{\Psi}\bra{\Psi}_{ABE})$. In order to let the extracted phase information as little as possible, the adversary will choose an optimal measurement on her local quantum system $E$ and rotate the phase-encoding basis according to the measurement results. In this case the interferometry power corresponds to entanglement. Since the the local measurement on $E$ will effectively make the remaining system be with a certain decomposition $\rho_{AB}=\sum_e p_e \ket{\psi_{AB}}\bra{\psi_{AB}}_e$ and the basis rotation operation depends on $e$, the interferometry power will be minimized over all kinds of decompositions and the local unitary operations on $A$. Therefore we have the following theorem.
\begin{theorem}\label{Theorem:entanglement}
	An entanglement measure is a convex roof of a discord measure, i.e.,
\begin{equation}\label{eq:entanglementdefinition}
\begin{aligned}
		& E(\rho_{AB})= \\
		&\min_{p_e,\ket{\psi_{AB}}_e}\sum_e p_e \min_{U_A^e} BD_{J_A}(U^e_A\otimes I\ket{\psi_{AB}}\bra{\psi_{AB}}_eU^{\dag e}_A \otimes I)
%&=\min_{p_e,\ket{\psi_{AB}}_e}\sum_e p_e \min_{U^e_A} \min_{\sigma^{IQ}_{AB}\in \delta_{IQ}}d(U^e_A\otimes I \ket{\psi_{AB}}_e\bra{\psi_{AB}}_e U^{\dag e}_A \otimes I|| \sigma^{IQ}_{AB})
\end{aligned}
\end{equation}
	where the minimization is over all possible decompositions of $\rho_{AB}=\sum_e p_e \ket{\psi_{AB}}\bra{\psi_{AB}}_e$ and $\ket{\psi_{AB}}_e$ is a pure state.
\end{theorem}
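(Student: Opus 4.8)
The plan is to recognize that Eq.~\eqref{eq:entanglementdefinition} is nothing but the convex-roof extension of the discord measure constructed in Theorem~\ref{Theorem:discord}, and then to verify the three defining axioms of an entanglement measure (vanishing on separable states, monotonicity under LOCC, and convexity) for that convex roof. First I would observe that the inner minimization is exactly the discord measure of Theorem~\ref{Theorem:discord} evaluated on the pure state $\ket{\psi_{AB}}_e$, i.e.
\begin{equation}
\min_{U_A^e} BD_{J_A}\left(U^e_A\otimes I\,\ket{\psi_{AB}}\bra{\psi_{AB}}_e\,U^{\dag e}_A \otimes I\right)=D\left(\ket{\psi_{AB}}\bra{\psi_{AB}}_e\right),
\end{equation}
so that $E(\rho_{AB})=\min_{\{p_e,\ket{\psi_{AB}}_e\}}\sum_e p_e\, D(\ket{\psi_{AB}}\bra{\psi_{AB}}_e)$ is the convex roof of $D$ restricted to pure states. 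This reframes the whole problem in the familiar language of convex-roof constructions, the same mechanism by which the entropy of entanglement is promoted to the entanglement of formation.

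Next I would dispatch the two easy axioms. For vanishing on separable states, I would use that a pure product state $\ket{\alpha}_A\otimes\ket{\beta}_B$ is rendered incoherent-quantum by the local unitary that rotates $\ket{\alpha}$ onto a single basis vector of $J_A$, so $D$ vanishes on pure product states; since every separable $\rho_{AB}$ admits a decomposition into pure product states, choosing it in the outer minimization gives $E(\rho_{AB})=0$. The stronger ``if and only if'' direction follows because a pure state with zero discord must be a product state, so an optimal decomposition achieving $E=0$ exhibits $\rho_{AB}$ as separable. Convexity is automatic for a convex roof: given optimal decompositions of $\rho_{AB}^1$ and $\rho_{AB}^2$, their $p$-weighted union is a valid decomposition of $p\rho_{AB}^1+(1-p)\rho_{AB}^2$, so $E(p\rho_{AB}^1+(1-p)\rho_{AB}^2)\le p\,E(\rho_{AB}^1)+(1-p)E(\rho_{AB}^2)$.

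The substantive work is LOCC monotonicity, and here I would invoke the standard Vidal-type criterion: the convex roof of a pure-state function is an entanglement monotone provided that function is invariant under local unitaries and does not increase on average under local operations. Local-unitary invariance of $D$ on pure states is immediate, since invariance under $U_A$ is built into the minimization defining $D$, and invariance under $U_B$ is exactly condition (BD3). The remaining ingredient---that $\sum_i q_i D(\psi_i)\le D(\psi)$ for the post-measurement pure states $\psi_i$ of any local instrument---is where BD-discord monotonicity enters: a local measurement on $B$ with Kraus operators $I\otimes \hat B_i$ is a selective SQI operation, while a local measurement on $A$ can be folded into an incoherent $A$-operation once $D$ has optimized away the basis dependence through $U_A$.

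The step I expect to be the main obstacle is precisely this last one. The property (BD2) is stated in \emph{non-selective} form, $BD_{J_A}(\Lambda_{SQI}(\rho_{AB}))\le BD_{J_A}(\rho_{AB})$, whereas the convex-roof argument needs the \emph{selective} on-average form $\sum_i q_i BD_{J_A}(\rho_i)\le BD_{J_A}(\rho_{AB})$; bridging this gap---which holds for the concrete measures of Table~\ref{tab:bddiscordmeasures} but must be assumed or separately established for a generic $BD_{J_A}$---is the crux. The more delicate sub-case is a general, non-incoherent local operation on $A$, because it can rotate the reference basis $J_A$; the role of the inner $\min_{U_A^e}$ is exactly to absorb this rotation, and I would argue that after the optimization each $A$-side Kraus operator acts effectively as an incoherent operation, so that selective SQI monotonicity applies term by term. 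Assembling these pieces through the convex roof then yields non-increase under a single local measurement together with classical communication, and hence under arbitrary LOCC.
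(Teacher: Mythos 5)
Your proposal follows essentially the same route as the paper's own proof: vanishing on separable states via a pure-product decomposition whose components are rotated onto a $J_A$ basis vector, convexity from the standard convex-roof union-of-decompositions argument, and LOCC monotonicity reduced to on-average (selective) monotonicity of the pure-state BD-discord together with the observation that per-component basis optimization can only help. The gap you flag as the crux --- that (BD2) is stated non-selectively while the argument needs the selective form $\sum_n p_n BD_{J_A}(\rho_{AB}^n)\le BD_{J_A}(\rho_{AB})$ --- is precisely the ingredient the paper also invokes without proof, calling it the ``selective monotonicity of the distance-based BD-discord measure,'' so your diagnosis coincides with the paper's own implicit restriction to measures having that property.
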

We leave the proof in Appendix \ref{app:proofentanglement}.

\subsection{Example with distance-based measures}

%\subsection{Unification of quantum resources with a distance-based basis-dependent discord measure}

%\appendix

In this section we show an example of the measure unification of different quantum resources, the distance-based measures. Given distance-based BD-discord in Eq.~\eqref{eq:distancemeasure}, the distance-based coherence, discord and entanglement measures are given by Theorem~\ref{Theorem:coherence}, Theorem~\ref{Theorem:discord} and Theorem~\ref{Theorem:entanglement}
\begin{widetext}
\begin{equation}\label{eq:BDdiscordtodiscordandentanglement}
	\begin{aligned}
	C_{J_A}(\rho_{A}) & =\min_{\sigma^{IQ}_{AB}\in \delta_{IQ}}d( \rho_{A} \otimes \rho_B|| \sigma^{IQ}_{AB}) \\
		D(\rho_{AB})&=\min_{U_A}\min_{\sigma^{IQ}_{AB}\in \delta_{IQ}}d(U_A\otimes I \rho_{AB}U^\dag_A \otimes I|| \sigma^{IQ}_{AB}) \\
		E(\rho_{AB})&=\min_{p_e,\ket{\psi_{AB}}_e}\sum_e p_e \min_{U^e_A} \min_{\sigma^{IQ}_{AB}\in \delta_{IQ}}d(U^e_A\otimes I \ket{\psi_{AB}}_e\bra{\psi_{AB}}_e U^{\dag e}_A \otimes I|| \sigma^{IQ}_{AB}).
	\end{aligned}
\end{equation}
\end{widetext}

We note that the unification results will also be applied for other measures. And the operational meanings of each resource will be consistent in our unification framework. For example, the relative entropy measure of BD-discord will be transformed into distillable coherence, discord and entanglement by \eqref{eq:BDdiscordtodiscordandentanglement} which are also quantified by relative entropy.
%\begin{theorem}\label{Theorem:discord}
%A discord measure is a minimization of basis-dependent discord measure over local bases, i.e.,
%\begin{equation}\label{eq:discord}
%\begin{aligned}
	%D(\rho_{AB})&=\min_{U_A}BD_{J_A}(U_A\otimes I \rho_{AB}U^\dag_A \otimes I) \\
%%\end{aligned}
%\end{equation}
%\end{theorem}

%\begin{lemma}\label{lemma:sep}
%A separable state is a convex combination of incoherent-quantum states in different local basis, i.e.,
%\begin{equation}
	%\sum_{j=1}^d p_{j}\rho^{j}_A \otimes \rho^{j}_B=\sum_{k}\lambda_{k}\left(\sum_{{j}=1}^d p_{j}\ket{n_{{j}k}}\bra{n_{{j}k}} \otimes \rho^{j}_B\right)
%\end{equation}
%where $\rho^{j}_A = \sum_{k}\lambda_{jk}\ket{n_{jk}}\bra{n_{jk}}$ is a spectral decomposition.
%\end{lemma}

%\begin{theorem}\label{Theorem:entanglement}
	%An entanglement measure is a convex roof of a discord measure, i.e.,
%\begin{equation}\label{eq:entanglementdefinition}
%\begin{aligned}
		%E(\rho_{AB})&=\min_{p_e,\ket{\psi_{AB}}_e}\sum_e p_e D(\ket{\psi_{AB}}_e) \\
%&=\min_{p_e,\ket{\psi_{AB}}_e}\sum_e p_e \min_{U^e_A} \min_{\sigma^{IQ}_{AB}\in \delta_{IQ}}d(U^e_A\otimes I \ket{\psi_{AB}}_e\bra{\psi_{AB}}_e U^{\dag e}_A \otimes I|| \sigma^{IQ}_{AB})
%\end{aligned}
%\end{equation}
	%where the minimization is over all possible decompositions of $\rho_{AB}=\sum_e p_e \ket{\psi_{AB}}_e\bra{\psi_{AB}}_e$ and $\ket{\psi_{AB}}_e$ is a pure state.
	
%\end{theorem}

\section{Discussion and conclusion}

In this work, we propose a unification framework on coherence, basis-dependent discord, discord and entanglement. We begin with constructing a resource framework of basis-dependent discord. As a bridge, basis-dependent discord connects coherence for their basis-dependence nature. On the other hand, it relates discord and entanglement since they all characterize bipartite quantum correlations. A unification framework of these quantum resources is established with the help of BD-discord. Moreover, we give the operational meanings of basis-dependent discord in QKD, which correspond to the local randomness of keys.

For future work, it is interesting to generalize these results to continuous variable cases, especially for Gaussian states. Discord and entanglement for Gaussian states have been well defined based on covariance matrix presentations \cite{Horodecki2007,Modi12}, however, the quantum coherence or a coherence-like basis-dependent quantity is still missing. This work can provide an inspiration to complete the unifications of quantum resources for continuous variables. And this will also help us understand the quantum resource behind the secure keys in continuous variable QKD.

{\textbf{Acknowledgement}}
We acknowledge Y. Zhou and X. Zhang for the insightful discussions.
This work was supported by the National Natural Science Foundation of China Grants No.~11674193 , the National Key R\&D Program of China (2017YFA0303900, 2017YFA0304004), the National Research Foundation (NRF), NRF-Fellowship (Reference No: NRF-NRFF2016-02), BP plc and the EPSRC National Quantum Technology Hub in Networked Quantum Information Technology (EP/M013243/1).

H.Z. and X.Y. contributed equally to this work.
%\newpage

\appendix
\section{Framework of discord and entanglement}\label{app:discordandentanglement}
\subsection{Discord}

In this part, we briefly review the framework for quantum discord \cite{Henderson01,Ollivier01,Modi12} in a bipartite system $AB$.

\emph{Definition of classical state.}
A state is classical for discord when it is a classical-quantum state, i.e.,
\begin{equation}\label{Eq:CQstate}
  \sigma^{CQ}_{AB} = \sum_n p_n\ket{n}\bra{n}\otimes\rho_B^n,
\end{equation}
where $\{\ket{n}\}$ is orthogonal for different $n$, $p_n\in[0,1], \forall n$ and $\sum_n p_n = 1$.

\emph{Definition of classical operation.}
The classical operation for discord is defined by local operations on $B$, i.e., $I_A \otimes \Phi_B$.

\emph{Discord measure.} A discord measure $D(\rho_{AB})$ is defined by a function that maps a quantum states $\rho$ to a real non-negative number, which satisfies the following
conditions in Table~\ref{Fig:discordProperties}:

\begin{tbox}[label=Fig:discordProperties]{Discord properties.}
\begin{enumerate}[(D1)]
\item
$D\left(\sigma_{AB}\right)$ vanishes for classical-quantum states, $\sigma_{AB} = \sum_n p_n\ket{n}\bra{n}\otimes\rho_B^n$.
\item
\emph{Monotonicity}: $D\left(\rho_{AB}\right)$ cannot increase under local operations, $D(I_A \otimes \Phi_B(\rho_{AB}))\le D(\rho_{AB}) $.
\item
$D\left(\rho_{AB}\right)$ is invariant under all local unitary operations, $D(\rho_{AB})=D(U_A \otimes U_B \rho_{AB} U_A^\dag \otimes U_B^\dag)$.
\end{enumerate}
\end{tbox}

%\begin{table}[htb]
%\begin{framed}
%\centering
%\begin{enumerate}[(D1)]
%\item
%$D\left(\sigma_{AB}\right)$ vanishes for classical-quantum state, $\sigma_{AB} = \sum_n p_n\ket{n}\bra{n}\otimes\rho_B^n$.
%\item
%\emph{Monotonicity}: $D\left(\rho_{AB}\right)$ cannot increase under separable operations, $D(I_A \otimes \Phi_B(\rho_{AB}))\le D(\rho_{AB}) $.
%\item
%$D\left(\rho_{AB}\right)$ is invariant under all local unitary operations, $D(\rho_{AB})=D(U_A \otimes U_B \rho_{AB} U_A^\dag \otimes U_B^\dag)$.
%\end{enumerate}
%\end{framed}
%\caption{Properties that a discord measure should satisfy.} \label{Fig:discordProperties}
%\end{table}

\subsection{Entanglement}
In this part, we summarize the framework for entanglement \cite{Bennett96, Vedral98, Horodecki09} in a bipartite system $AB$.

\emph{Definition of classical state.}
A state is classical for entanglement when it is separable, i.e.,
\begin{equation}\label{}
  \sigma^{sep.}_{AB} = \sum_n p_n\rho_A^n\otimes\rho_B^n,
\end{equation}
where $p_n\in[0,1], \forall n$ and $\sum_n p_n = 1$.

\emph{Definition of classical operation.}
The classical operation for entanglement is defined by local operation and classical communication (LOCC). In the following, we denote LOCC operations by $\Lambda_{LOCC}$.

\emph{Entanglement measure.} An entanglement measure $D(\rho_{AB})$ is defined by a function that maps a quantum states $\rho$ to a real non-negative number, which satisfies the following
conditions in Table~\ref{Fig:entanglementProperties}:

\begin{tbox}[label=Fig:entanglementProperties]{Entanglement properties.}
\begin{enumerate}[(E1)]
\item
$E(\rho_{AB})$ vanishes when $\rho_{AB}$ is separable.
\item
\emph{Monotonicity}: $E(\rho_{AB})$ cannot increase under LOCC operation, that is, (E2a) $E[\Lambda_{LOCC}(\rho_{AB})]\leq E(\rho_{AB})$. This condition is often replaced by another stronger one. (E2b) $E(\rho_{AB})$ should not increase on average under LOCC operations which map $\rho_{AB}$ to $\rho_{AB}^k$ with probability $p_k$, then $\sum_k p_k E(\rho_{AB}^k)\leq E(\rho_{AB})$.
\item
\emph{Convexity}: $E(\rho_{AB})$ decreases under mixing, $E(\sum_k p_k \rho_{AB}^k) \leq \sum_k p_k E(\rho_{AB}^k)$.
\item
$E(\rho_{AB})$ is invariant under all local unitary operations, that is,  $E(\rho_{AB})=E(U_A \otimes U_B \rho_{AB} U_A^\dag \otimes U_B^\dag)$.
\end{enumerate}
\end{tbox}\label{}

\section{Proofs for BD-discord measures}\label{app:proofmeasure}
In order to formulate the conditions of BD-discord measures, we investigate the properties of incoherent unitary operations.
\begin{lemma}\label{lemma:1}
The Kraus operator of a unitary operation is unique.
\end{lemma}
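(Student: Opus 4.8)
The plan is to interpret the statement precisely and then prove it. A unitary operation is the channel $\Phi(\rho)=U\rho U^{\dagger}$, whose canonical operator-sum form uses the single Kraus operator $\{U\}$. I read the lemma as asserting that this representation is essentially unique: if $\Phi$ is also written as $\Phi(\rho)=\sum_n K_n\rho K_n^{\dagger}$ with $\sum_n K_n^{\dagger}K_n=I$, then every $K_n$ must be a scalar multiple of $U$, so up to the unavoidable scalar freedom fixed by the completeness relation the Kraus operator equals $U$.

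First I would evaluate both representations on an arbitrary pure input $\rho=\ket{\psi}\bra{\psi}$. The output $U\ket{\psi}\bra{\psi}U^{\dagger}$ is a rank-one projector, while the right-hand side $\sum_n K_n\ket{\psi}\bra{\psi}K_n^{\dagger}$ is a sum of positive semidefinite operators. Since each summand is bounded above by the total, which is rank one, each summand must be supported on that same one-dimensional range. This forces $K_n\ket{\psi}=\mu_n(\psi)\,U\ket{\psi}$ for some scalars $\mu_n(\psi)$, for every state $\ket{\psi}$.

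The key step is to eliminate the dependence of $\mu_n$ on $\ket{\psi}$. Setting $M_n=U^{\dagger}K_n$, the previous relation becomes $M_n\ket{\psi}=\mu_n(\psi)\ket{\psi}$ for every vector, i.e.\ every state is an eigenvector of $M_n$. A standard linear-algebra fact then forces $M_n=c_n I$: taking two linearly independent eigenvectors and applying $M_n$ to their sum shows their eigenvalues must coincide, so all eigenvalues are equal. Hence $K_n=c_n U$, and substituting into $\sum_n K_n^{\dagger}K_n=I$ gives $\sum_n|c_n|^2=1$, confirming that all Kraus operators lie along the single direction $U$.

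The main obstacle is exactly this last step, since \emph{a priori} the proportionality constant $\mu_n(\psi)$ could vary with $\ket{\psi}$; the argument that "every vector being an eigenvector implies the operator is scalar" is what pins down uniqueness. As an alternative, one can bypass the direct computation by invoking the unitary freedom in operator-sum representations, namely that two Kraus sets describe the same channel if and only if they are related by an isometry $K_n=\sum_m v_{nm}L_m$; applying this with the singleton set $\{U\}$ immediately yields $K_n=v_{n1}U$. I would nonetheless present the direct argument as the primary route, since it is self-contained and does not require citing the general freedom theorem.
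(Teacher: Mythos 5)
Your proof is correct, and it takes a genuinely different (and more careful) route than the paper. The paper's own proof is a one-liner invoking the general unitary-freedom theorem for operator-sum representations: any alternative Kraus set must satisfy $E_i=\sum_j u_{ij}U_j$ with $u_{ij}$ an isometry, and since the reference representation $\{U\}$ is a singleton this collapses to $E_i \propto U$ (the paper even writes $E_i=U$, glossing over the scalar freedom). Your primary argument is instead self-contained: you use positivity to show that a PSD summand dominated by a rank-one operator must be supported on its range, deduce $K_n\ket{\psi}=\mu_n(\psi)U\ket{\psi}$ for every $\ket{\psi}$, and then kill the $\psi$-dependence with the standard ``every vector is an eigenvector $\Rightarrow$ scalar operator'' lemma, finishing with the completeness relation to get $\sum_n|c_n|^2=1$. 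What your approach buys is independence from the freedom theorem and an honest statement of the conclusion --- the Kraus representation is unique only up to scalars $c_n$ with $\sum_n|c_n|^2=1$ (e.g.\ $\{U/\sqrt{2},\,U/\sqrt{2}\}$ also represents the same channel), which is the correct reading and is all that the paper's Lemma 2 actually needs downstream. What the paper's route buys is brevity, at the cost of citing the general theorem and of an imprecise final claim; your closing remark about the isometry freedom applied to the singleton $\{U\}$ is in fact exactly the paper's argument, stated correctly.
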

\begin{proof}
Consider a unitary operation $U$, one possible Kraus operator representation can be written as $UU^\dag$ which is rank 1. All of its other Kraus operator representations are $E_i=\sum_ju_{ij}U_j$, where $u_{ij}$ is a unitary matrix. Since $U$ is rank 1, the matrix $u_{ij}$ reduces to $1$ and $E_i=U$.
\end{proof}
\begin{lemma}\label{lemma:2}
If a unitary operation is an incoherent operation, its inverse operation is also an incoherent operation
\end{lemma}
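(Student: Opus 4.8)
\textbf{Proof proposal for Lemma~\ref{lemma:2}.}

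The plan is to use the characterization of an incoherent unitary together with the uniqueness of its Kraus representation established in Lemma~\ref{lemma:1}. Recall that a unitary $U$ is an incoherent operation precisely when its single Kraus operator maps every incoherent state to an incoherent state, i.e.\ $U\ket{j}\bra{j}U^\dag$ is incoherent (diagonal in the basis $J$) for all basis elements $\ket{j}$. First I would show that an incoherent unitary must be of the form $U=\sum_j e^{i\theta_j}\ket{\pi(j)}\bra{j}$ for some permutation $\pi$ of $\{1,\dots,d\}$ and phases $\theta_j\in[0,2\pi)$; this is the standard fact that the only unitaries preserving the set of incoherent states are permutation matrices dressed with diagonal phases. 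The argument is that $U\ket{j}\bra{j}U^\dag$ being a rank-one diagonal projector forces $U\ket{j}$ to be proportional to a single basis vector $\ket{\pi(j)}$, and unitarity of $U$ forces $j\mapsto\pi(j)$ to be a bijection.

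Next I would simply compute the inverse. Since $U=\sum_j e^{i\theta_j}\ket{\pi(j)}\bra{j}$, its inverse is
\begin{equation}
U^{-1}=U^\dag=\sum_j e^{-i\theta_j}\ket{j}\bra{\pi(j)}=\sum_k e^{-i\theta_{\pi^{-1}(k)}}\ket{\pi^{-1}(k)}\bra{k},
\end{equation}
which is again of the permutation-plus-phase form, now with permutation $\pi^{-1}$ and phases $-\theta_{\pi^{-1}(k)}$. Acting on any incoherent state $\sigma=\sum_k p_k\ket{k}\bra{k}$, one checks directly that $U^{-1}\sigma (U^{-1})^\dag=\sum_k p_k\ket{\pi^{-1}(k)}\bra{\pi^{-1}(k)}$ is incoherent. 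By Lemma~\ref{lemma:1} the unitary $U^{-1}$ has a unique Kraus operator, so this single-operator check suffices to conclude that $U^{-1}$ is an incoherent operation, completing the proof.

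I expect the main obstacle to be the first step, namely pinning down the precise normal form of an incoherent unitary rather than the routine inverse computation. One must be careful that the defining condition for an incoherent operation is stated in terms of mapping the whole set of incoherent states into itself (equivalently, via the Kraus/completeness condition $\sum_n\hat K_n^\dag\hat K_n=I$), and then invoke Lemma~\ref{lemma:1} to reduce the analysis to a single Kraus operator. An alternative, slightly slicker route avoids the explicit normal form: since $U$ is incoherent it maps the set of incoherent states $\mathcal C$ into $\mathcal C$; because $U$ is unitary this map is injective on the (finite-dimensional, compact, convex) set $\mathcal C$ and maps extreme points $\ket{j}\bra{j}$ to extreme points of $\mathcal C$, hence it is a bijection of $\mathcal C$ onto itself, so $U^{-1}=U^\dag$ also maps $\mathcal C$ into $\mathcal C$ and is therefore incoherent. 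I would present the explicit permutation-phase argument as the primary proof since it is the most transparent and directly exhibits $U^{-1}$ as incoherent.
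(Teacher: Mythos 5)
Your proposal is correct, but it takes a genuinely different route from the paper's. The paper argues by contradiction at the level of coherence quantifiers: after using Lemma~\ref{lemma:1} to reduce the unitary channel to the single Kraus operator $U$, it invokes monotonicity, $C(U\rho U^\dag)\le C(\rho)$, and asserts that if $U^\dag$ were not incoherent one would obtain a strict coherence increase contradicting Eq.~\eqref{eq:lemma:2}. You instead establish a structural normal form: an incoherent unitary must be $U=\sum_j e^{i\theta_j}\ket{\pi(j)}\bra{j}$ with $\pi$ a permutation and then verify by direct computation that $U^\dag$ has the same form, so it maps every incoherent state to an incoherent state; Lemma~\ref{lemma:1} is used only to justify that checking this single Kraus operator suffices. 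Your route is longer but buys more. It is measure-independent, it yields the full classification of incoherent unitaries (from which the paper's subsequent lemma, invariance of coherence under incoherent unitaries, also follows immediately), and it avoids a delicate point in the measure-based argument: ``$U^\dag$ is not incoherent'' only guarantees the existence of \emph{some} incoherent $\sigma$ with $U^\dag\sigma U$ coherent; converting that into the strict inequality used in the paper requires a faithful quantifier satisfying (C1') and a specific choice of $\rho$, and even then the resulting inequality $C(\rho)>C(U\rho U^\dag)$ is not in literal contradiction with the monotonicity relation $C(U\rho U^\dag)\le C(\rho)$, which permits strict decrease. The content that actually makes the lemma true is the finiteness/permutation structure you exploit -- equivalently, your alternative observation that conjugation by $U$ injectively maps the $d$ extreme points of $\mathcal{C}$ into themselves and hence permutes them -- so your argument is not merely a rephrasing but a tightening of the paper's reasoning.
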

\begin{proof}
Consider a unitary operation $U$, it has unique Kraus operator representation $UU^\dag$ according to Lemma~\ref{lemma:1}. If it is an incoherent operation,  we have
\begin{equation}\label{eq:lemma:2}
C(U\rho U^\dag)\le C(\rho)
\end{equation}
for an arbitrary state $\rho$.
Assume that its reverse operation, $U^{-1}=U^\dag$, is not an incoherent operation, then $C(\rho)=C(U^\dag U\rho U^\dag U)>C(U\rho U^\dag)$, which leads to a contradiction with Eq.~\eqref{eq:lemma:2}.
\end{proof}
\begin{lemma}\label{}
The coherence of an arbitrary state is invariant under incoherent unitary operations.
\end{lemma}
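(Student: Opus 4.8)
The plan is to prove the claimed invariance by a two-sided application of the coherence monotonicity axiom (C2a), using Lemma~\ref{lemma:2} to supply the reverse direction. First I would fix an incoherent unitary $U$ and an arbitrary state $\rho$. Because $U$ is incoherent, Lemma~\ref{lemma:1} tells us it has a unique single-operator Kraus representation, so the map $\rho \mapsto U\rho U^\dag$ is a bona fide incoherent CPTP operation to which (C2a) applies. Applying it directly gives the first inequality, $C(U\rho U^\dag)\le C(\rho)$.

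For the reverse inequality I would invoke Lemma~\ref{lemma:2}: since $U$ is incoherent, its inverse $U^\dag$ is also an incoherent unitary. Applying (C2a) to the state $\sigma = U\rho U^\dag$ under the incoherent operation $U^\dag$ yields $C(U^\dag \sigma U) \le C(\sigma)$, i.e.\ $C(\rho) \le C(U\rho U^\dag)$. Combining the two bounds forces equality, $C(U\rho U^\dag) = C(\rho)$, for every $\rho$, which is precisely the asserted invariance.

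The argument is short and essentially symmetric, so there is no computational obstacle; the only delicate point is conceptual, and it is exactly what the two preceding lemmas are there to settle. One must be certain that \emph{both} $U$ and $U^\dag$ count as incoherent operations in the sense required by the monotonicity axiom. Lemma~\ref{lemma:1} guarantees that a unitary is carried by a single Kraus operator, so that invoking (C2a) is unambiguous, while Lemma~\ref{lemma:2} guarantees that incoherence survives inversion, which is what lets the monotonicity bound be run in both directions rather than only one. Once these facts are in place, the chain of two inequalities closes and the equality is immediate.
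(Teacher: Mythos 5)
Your proposal is correct and follows essentially the same route as the paper: both proofs sandwich $C(U\rho U^\dag)$ between two applications of the monotonicity axiom (C2a), one for $U$ and one for its inverse, with Lemma~\ref{lemma:2} supplying the fact that $U^\dag$ is also incoherent. The only cosmetic difference is that you cite Lemma~\ref{lemma:1} explicitly to justify treating the unitary as a single-Kraus-operator incoherent operation, a point the paper leaves implicit here since it already used it to prove Lemma~\ref{lemma:2}.
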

\begin{proof}
Consider an incoherent unitary operation $U$. Its reverse operation $U^\dag$ is also a unitary operation $U$ according to Lemma~\ref{lemma:2}. Then $C(\rho)=C(U^\dag U\rho U^\dag U)\le C(U\rho U^\dag)$. On the other hand, $C(U\rho U^\dag)\le C(\rho)$ since $U$ is an incoherent operation, which leads to $C(\rho)=C(U\rho U^\dag)$.
\end{proof}
With the lemmas above, we can first prove that the distance-based measure in Eq.~\eqref{eq:distancemeasure} satisfy all the conditions of a BD-discord measure.
\begin{proof}
Proof of (BD1). It is straightforward that $BD_{J_A}(\sigma^{QI}_{AB})=0$ according to the definition.

Proof of (BD2). We have such relations

\begin{equation}
\begin{aligned}
&BD_{J_A}[\Lambda_{SQI}(\rho_{AB})] \\
& =\min_{\sigma^{IQ}_{AB}\in \delta_{IQ}}d(\Lambda_{SQI}(\rho_{AB})|| \sigma^{IQ}_{AB})\\
&=\min_{\sigma^{IQ}_{AB}\in \delta_{IQ}}d(\Lambda_{SQI}(\rho_{AB})|| \Lambda_{SQI}(\sigma^{IQ}_{AB})) \\
&\leq \min_{\sigma^{IQ}_{AB}\in \delta_{IQ}} d(\rho_{AB}|| \sigma^{IQ}_{AB}) \\
&=BD_{J_A}(\rho_{AB}),
\end{aligned}
\end{equation}
where the second equality is because $\Lambda_{SQI}(\sigma^{IQ}_{AB})\in \delta_{IQ}$ and the inequality is due to the contractive nature of a distance measure, i.e, the distance will not increase under a completely positive and trace preserving (CPTP) map.

Proof of (BD3). Note the local incoherent unitary operation on $A$ and a unitary operation on $B$ as $U^I_A \otimes U_B$, which is a SQI operation, then
\begin{equation}
\begin{aligned}
	&\min_{\sigma^{IQ}_{AB}\in \delta_{IQ}}d(U^I_A \otimes U_B\rho_{AB}U^{I\dag}_A \otimes U^{\dag}_B||\sigma^{IQ}_{AB}) \\
	&\leq\min_{\sigma^{IQ}_{AB}\in \delta_{IQ}}d(\rho_{AB}||\sigma^{IQ}_{AB})
\end{aligned}
\end{equation}

On the other hand, the reverse operation $U^{I\dag}_A \otimes U^\dag_B$ is also a SQI operation since $U^{I\dag}_A$ is an incoherent operation according to Lemma \ref{lemma:2}, then
\begin{equation}
\begin{aligned}
	&\min_{\sigma^{IQ}_{AB}\in \delta_{IQ}}d(\rho_{AB}||\sigma^{IQ}_{AB})\\
	&=\min_{\sigma^{IQ}_{AB}\in \delta_{IQ}}d(U^I_AU^{I\dag}_A \otimes U_BU^{\dag}_B\rho_{AB}U^{I\dag}_AU^I_A \otimes U^{\dag}_BU_B||\sigma^{IQ}_{AB}) \\
	&\leq \min_{\sigma^{IQ}_{AB}\in \delta_{IQ}}d(U^I_A \otimes U_B\rho_{AB}U^{I\dag}_A \otimes U^{\dag}_B||\sigma^{IQ}_{AB})
\end{aligned}
\end{equation}

Thus we conclude that
\begin{equation}
\begin{aligned}
	&\min_{\sigma^{IQ}_{AB}\in \delta_{IQ}}d(\rho_{AB}||\sigma^{IQ}_{AB})\\
	&=\min_{\sigma^{IQ}_{AB}\in \delta_{IQ}}d(U^I_A \otimes U_B\rho_{AB}U^{I\dag}_A \otimes U^{\dag}_B||\sigma^{IQ}_{AB})
\end{aligned}
\end{equation}

\end{proof}

Next we prove that the convex roof measure Eq.~\eqref{eq:convexmeasure} also satisfies all conditions of a BD-discord measure.
\begin{proof}
Proof of (BD1). Consider the spectral decomposition of $\rho_B^{j_A}$ in Eq.~\eqref{eq:iqstate1}, an IQ state can be rewritten as
\begin{equation}
\sigma_{AB}^{IQ}=\sum_{j_A, l_{j_A}}p_{j_A}l_{j_A}\ket{j_Al_{j_A}}\bra{j_Al_{j_A}}.
\end{equation}	
\end{proof}
For each pure state component $\ket{j_Al_{j_A}}$, the local randomness is zero according to Eq.~\eqref{Eq:localrandomness}. And such a decomposition is an optimal decomposition due to the non-negativity of a BD-discord measure.

Proof of (BD2). Suppose the optimal decomposition is $\rho_{AB}=\sum_e p_e \ket{\psi_{AB}}\bra{\psi_{AB}}_e$. For an arbitrary component $\ket{\psi_{AB}}_e$, the local randomness is
\begin{equation}\label{Eq:localrandomnessofpsiab}
R(\ket{\psi_{AB}}_e)=S(\sum_{j_A}|\langle{j_A}|\mathrm{tr_B}\ket{\psi_{AB}}|^2 \ket{j_A}\bra{j_A})
\end{equation}
We notice that Eq.~\eqref{Eq:localrandomnessofpsiab} is equal to the relative entropy of BD-discord measure of $\ket{\psi_{AB}}$, which is a distance-based measure and contractive under $\Lambda_{SQI}$. The convex roof is a mixture of the local randomness for each pure state component, and the mixture is also contractive under $\Lambda_{SQI}$.

Proof of (BD3). Same as the proof for distance-based measure.

\section{Proof of Theorem~\ref{Theorem:discord}}\label{app:proofdiscord}
\begin{proof}
Proof of (D1).	For a classical-quantum state in Eq.~\eqref{Eq:CQstate}, we set $U_A\ket{n}=\ket{j_A}$ for $n=1,2,\cdots d_A$, then
\begin{equation}
\begin{aligned}
&BD_{J_A}\left[U_A\otimes I  \left(\sum^{d_A}_n p_n\ket{n}\bra{n}\otimes\rho_B^n \right) U^\dag_A \otimes I\right] \\
&= BD_{J_A}\left[\sum^{d_A}_n p_n (U_A \ket{n})(\bra{n}U_A^\dag) \otimes \rho_B^n \right] \\
&= BD_{J_A}\left(\sum^{d_A}_{j_A} p_{j_A} \ket{j_A}\bra{j_A} \otimes \rho_B^n \right) \\
&= 0.
\end{aligned}
\end{equation}
We can see that such a $U_A$ is optimal, which realizes a minimization of $D(\rho_{AB})$ due to the non-negativity of a basis-dependent discord measure.

Proof of (D2). Since $I\otimes \Phi_{B} \subset \Lambda_{SQI}$, from (BD2) we have
\begin{equation}
BD_{J_A}[I\otimes \Phi_{B}(\rho_{AB})]\leq BD_{J_A}(\rho_{AB})
\end{equation}
and their minimization on the local basis also satisfies
\begin{equation}
\begin{aligned}
	&\min_{U_A}BD_{J_A}[I\otimes \Phi_{B}(U_A\otimes I \rho_{AB}U^\dag_A\otimes I)] \\
	&\leq\min_{U_A} BD_{J_A}(U_A\otimes I \rho_{AB}U^\dag_A\otimes I)	
\end{aligned}
\end{equation}

Proof of (D3).  Our target is to prove
\begin{equation}\label{eq:invariant}
\begin{aligned}
&\min_{U_A}BD_{J_A}[(U_A\otimes U_B)(U_A\otimes I)\rho_{AB}(U^{\dag}_A\otimes I)(U_A^\dag \otimes U_{B}^\dag)] \\
&=\min_{U_A} BD_{J_A}[(U_A\otimes I)\rho_{AB}(U^\dag_A\otimes
I)]	
\end{aligned}
\end{equation}
Note that, in our definition of discord, the minimization is over all local basis, it is equal to prove that
\begin{equation}
\begin{aligned}
&\min_{U_A}BD_{J_A}[(I\otimes U_B)(U_A\otimes I)\rho_{AB}(U^{\dag}_A\otimes I)(I \otimes
U_{B}^\dag)] \\
&=\min_{U_A} BD_{J_A}[(U_A\otimes I)\rho_{AB}(U^\dag_A\otimes
I)]	
\end{aligned}
\end{equation}

Since $I\otimes U_B \subset I\otimes \Phi_{B}$, according to (D2) we have
\begin{equation}\label{eq:}
\begin{aligned}
&\min_{U_A}BD_{J_A}[(I\otimes U_B)(U_A\otimes I)\rho_{AB}(U^{\dag}_A\otimes I)(I \otimes
U_{B}^\dag)] \\
&\leq \min_{U_A} BD_{J_A}[(U_A\otimes I)\rho_{AB}(U^\dag_A\otimes
I)]	
\end{aligned}
\end{equation}
Apply local operation $I\otimes U_{B}^\dag$ on both sides,
\begin{equation}
\begin{aligned}
&\min_{U_A}BD_{J_A}[(U_A\otimes I)\rho_{AB}(U^{\dag}_A\otimes I)] \\
&\leq \min_{U_A} BD_{J_A}[(I\otimes U_{B}^\dag)(U_A\otimes I)\rho_{AB}(U^\dag_A\otimes
I)(I\otimes U_B)]	
\end{aligned}
\end{equation}
As the local operation $I\otimes U_{B}^\dag\subset I\otimes \Phi_B$, we also have
\begin{equation}
\begin{aligned}
&\min_{U_A}BD_{J_A}[(U_A\otimes I)\rho_{AB}(U^{\dag}_A\otimes I)] \\
&\leq \min_{U_A} BD_{J_A}[(I\otimes U_{B}^\dag)(U_A\otimes I)\rho_{AB}(U^\dag_A\otimes
I)(I\otimes U_B)]	
\end{aligned}
\end{equation}
Thus we prove Eq.~\eqref{eq:invariant}.
\end{proof}

\section{Proof of Theorem~\ref{Theorem:entanglement}}\label{app:proofentanglement}
\begin{proof}
Since condition (E2a) can be derived with (E2b) and (E3),
\begin{equation}
\begin{aligned}
E(\Lambda_{LOCC}(\rho_{AB}))&=E(\sum p_n \rho_{AB}^n)  \\
&\overset{C3}\leq \sum p_n E(\rho_{AB}^n) \\
&\overset{C2b} \leq E(\rho_{AB}),
\end{aligned}
\end{equation}
where $\rho_{AB}^n=\hat{K}_n\rho_{AB}\hat{K}_n^\dag/p_n$ and $p_n=Tr({\hat{K}_n\rho_{AB}\hat{K}_n^\dag} )$,
we only need to prove (E1), (E2b), (E3) and (E4).

Proof of (E1). Since the set of separable states is convex and closed, a separable state $\sigma_{AB}=\sum_j p_j \rho_A^j \otimes \rho_B^j$ can always be expressed as a mixture of pure separable states, i.e., product states.
\begin{equation}\label{eq:sepdecomp}
\sigma_{AB}=\sum_j p_j \rho_A^j \otimes \rho_B^j=\sum_e p_e \ket{\psi_{A}}_e\ket{\psi_B}_e \bra{\psi_{A}}_e\bra{\psi_B}_e
\end{equation}

Substitute Eq.~\eqref{eq:sepdecomp} into Eq.~\eqref{eq:entanglementdefinition}, for each pure state component $\ket{\psi_{A}}_e\ket{\psi_B}_e$, we set a certain $U^e_A$ such that $U^e_A\ket{\psi_A}=\ket{j_A}$, then
\begin{widetext}
\begin{equation}
\begin{aligned}
 & \min_{p_e,\ket{\psi_{AB}}_e}\sum_e p_e \min_{U^e_A} BD_{J_A}(U^e_A\otimes I \ket{\psi_{A}}_e\ket{\psi_B}_e \bra{\psi_{A}}_e\bra{\psi_B}_e U^{\dag e}_A \otimes I) \\
 & =\min_{p_e,\ket{\psi_{AB}}_e}\sum_e p_e BD_{J_A}(\ket{j_A}\bra{j_A}\otimes \ket{\psi_B}_e\bra{\psi_B}) \\
 & =0
 \end{aligned}
\end{equation}
\end{widetext}
We can see that such a st of $U_A^e$ and decomposition are optimal, which realizes a minimization of $E(\rho_{AB})$ due to the non-negativity of a basis-dependent discord measure.

%Then $E(\rho_{AB})\leq \sum_k\lambda_k \min_{J_A}Q_{J_A}\left(\sum_{j=1}^d p_j \ket{n_{jk}}\bra{n_{jk}}\otimes \rho^B_j\right)=0$ since the lhs. should take the decomposition letting $E(\rho^{AB})$ be the minimum. Considering the entanglement measure should always be non-negative, $E(\rho_{AB})=0$.

Proof of (E3). Suppose an arbitrary decomposition of $\rho_{AB}=\sum_l p_l \rho^l_{AB}$, and
\begin{equation}\label{eq:entanglementconvexity}
\begin{aligned}
&\sum_l p_l E(\rho^l_{AB})= \\
&\sum_l p_l \min_{p_{e}}\sum_{e} p_{e}  \min_{U^e_A} BD_{J_A}(U^e_A\otimes I \ket{\psi_{AB}}_{e}\bra{\psi_{AB}}_e U^{\dag e}_A \otimes I)
\end{aligned}
\end{equation}
where we simplify the subscript of minimizing decomposition $p_e, \ket{\psi_{AB}}_e$ to $p_e$.
Suppose for each component $\rho_{AB}^l$ the optimal decomposition is $\rho_{AB}^l=\sum_{e_l}p_e^l \ket{\psi_{AB}}_e^l\bra{\psi_{AB}}_e^l$, and we can further rewrite Eq.~\eqref{eq:entanglementconvexity} as
\begin{equation}\label{eq:optimaldecomp1}
\begin{aligned}
&\sum_l p_l E(\rho^l_{AB})\\
&=\sum_l p_l \sum_{e_l} p_{e}^l \min_{U^e_A} BD_{J_A}(U^e_A\otimes I \ket{\psi_{AB}}_{e}^l\bra{\psi_{AB}}_e^l U^{\dag e}_A \otimes I)	\\
&=\sum_l\sum_{e_l} p_l p_e^l \min_{U^e_A} BD_{J_A}(U^e_A\otimes I \ket{\psi_{AB}}_{e}^l\bra{\psi_{AB}}_e^l U^{\dag e}_A \otimes I)
\end{aligned}
\end{equation}
Similarly we assume the optimal decomposition for $\rho_{AB}$ is $\rho_{AB}=\sum_e p_e \ket{\psi_{AB}}_e\bra{\psi_{AB}}_e$
\begin{equation}\label{eq:optimaldecomp2}
\begin{aligned}
& E(\rho_{AB})=\\
&\sum_e p_e \min_{U^e_A} BD_{J_A}(U^e_A\otimes I \ket{\psi_{AB}}_{e}\bra{\psi_{AB}}_e U^{\dag e}_A \otimes I)
\end{aligned}
\end{equation}

Compare Eq.~\eqref{eq:optimaldecomp1} and Eq.~\eqref{eq:optimaldecomp2}, we can see that they are all probabilistic mixture of bipartite pure state discord. However, the ways of decomposition in Eq.~\eqref{eq:optimaldecomp2} is more than those in Eq.~\eqref{eq:optimaldecomp1} since the latter is constrained by the decomposition $\rho_{AB}=\sum_l p_l \ket{\psi_{AB}}_e \bra{\psi_{AB}}_e$
%$Q_{J_A}(\cdot)$ is a coherence measure in arbitrary basis $\{J_A\}$, hence $\min_{J_A}Q_{J_A}(\cdot)$ is a convex function. For an arbitrary decomposition $\rho_{AB}=\sum_n p_n\rho_{AB}^n$, we have $\min_{J_A}Q_{J_A}(\rho_{AB})\leq \sum_n p_n (\rho_{AB}^n)$. Therefore an arbitrary mixture of $\min_{J_A}Q_{J_A}(\cdot)$, such as $E(\cdot)$, will also be a convex function.
Then we conclude that
\begin{equation}
E(\rho_{AB})\geq \sum_l p_l E(\rho^l_{AB})	
\end{equation}

Proof of (E2b). Suppose the decomposition of $\rho_{AB}$ that achieves minimum of $E(\rho_{AB})$ is $\rho_{AB}=\sum_e p_e \rho_{AB}^e$, where $\rho_{AB}^e$ is a pure state. After the CPTP channel of LOCC,
\begin{equation}
\begin{aligned}
\rho^n_{AB}&=\frac{\hat{K}_n\rho_{AB}\hat{K}_n^\dag}{p_n} \\
&=\sum_e \frac{p_e}{p_n}\hat{K}_n \rho_{AB}^e \hat{K}_n^\dag \\
&=\sum_e \frac{p_e}{p_n} p_{en} \rho^{en}_{AB}
\end{aligned}
\end{equation}
where $p_{en}=Tr(\hat{K}_n \rho_{AB}^e \hat{K}_n^\dag)$ and $\rho^{en}_{AB}=\hat{K}_n \rho_{AB}^e \hat{K}_n^\dag/p_{en}$. Then we have
\begin{equation}
\begin{aligned}
&E(\rho_{AB})
=\sum_e p_e \min_{U_A} BD_{J_A}(U_A\otimes I\rho_{AB}^e U_A^\dag \otimes I) \\
&\geq \sum_e p_e \min_{U_A} \sum_n p_{en} BD_{J_A}(U_A\otimes I\rho_{AB}^{en} U_A^\dag \otimes I ) \\
&\geq \sum_e p_e \sum_n p_{en} \min_{U_{An}} BD_{J_A}(U_{An}\otimes I\rho_{AB}^{en} U_{An}^\dag \otimes I) \\
&=\sum_n \sum_e p_e p_{en} \min_{U_{An}} BD_{J_A}(U_{An}\otimes I\rho_{AB}^{en} U_{An}^\dag \otimes I ) \\
&=\sum_n p_n\sum_e \frac{p_e p_{en}}{p_n} \min_{U_{An}} BD_{J_A} (U_{An}\otimes I\rho_{AB}^{en} U_{An}^\dag \otimes I)\\
&\geq \sum_n p_n E(\rho^n_{AB}),
\end{aligned}
\end{equation}
where the first inequality is due to the selective monotonicity of distance-based BD-discord measure,
\begin{equation}
BD_{J_A}(\rho_{AB})	\geq \sum_n p_n  BD_{J_A}(\rho^n_{AB})
\end{equation}
 the second inequality is because the minimization over local basis according to each component after channel $U_{An}$ is more powerful than an entire minimization $U_A$.

Proof of (E4). Local unitary operations $U_A \otimes U_B$ belong to LOCC. Then according to (E2a),
\begin{equation}
E(\rho_{AB})\geq E(U_A \otimes U_B \rho_{AB} U_A^\dag \otimes U_B^\dag)
\end{equation}
 Apply $U_A^\dag \otimes U_B^\dag$ to the last equation,
\begin{equation}
E(U_A^\dag \otimes U_B^\dag  \rho_{AB} U_A \otimes U_B) \geq E(\rho_{AB})
\end{equation}

On the other hand, operations $U_A^\dag \otimes U_B^\dag$ also belong to LOCC.
\begin{equation}
E(U_A^\dag \otimes U_B^\dag  \rho_{AB} U_A \otimes U_B) \leq E(\rho_{AB})
\end{equation}

Therefore we have
\begin{equation}
E(\rho_{AB})= E(U_A \otimes U_B \rho_{AB} U_A^\dag \otimes U_B^\dag)
\end{equation}
\end{proof}

\bibliography{bibmeasure}
\bibliographystyle{apsrev4-1}
%%%%%%%%%%%%%%%%%%%%%%%%%%%%%%%%%%%%%%%%
\end{document}